\documentclass[a4paper, 12pt]{article}
\usepackage{relsize}
\usepackage{times,float}
\usepackage{amsmath}
\usepackage{amsthm,thm-restate}
\theoremstyle{definition}
\usepackage{amssymb}
\usepackage{graphicx,epsf}
\usepackage[round,comma]{natbib}
\usepackage{color}
\usepackage{bm}
\usepackage[colorlinks=true,citecolor=blue]{hyperref}
\usepackage[multiple]{footmisc}
\usepackage[subnum]{cases}
\usepackage[english]{babel}
\usepackage[toc,page]{appendix}
\usepackage{lscape}
\usepackage{caption}
\usepackage{subcaption}
\usepackage{enumerate}
\usepackage{booktabs}
\usepackage{multirow}
\usepackage{accents}
\usepackage[textwidth=5cm,textheight=5cm]{geometry}
\usepackage{cancel}

\usepackage{tikz}
\usepackage{xr}
\usepackage{algorithmicx,algorithm}
\usepackage{algpseudocode}

\topmargin -.5cm \oddsidemargin 0cm \evensidemargin 1cm \textwidth
450pt \textheight 665pt

\def\ignore#1{}
\setlength{\parindent}{.2in}
\newlength{\defbaselineskip} \setlength{\defbaselineskip}{\baselineskip}
\newcommand{\setlinespacing}[1]%
{\setlength{\baselineskip}{#1 \defbaselineskip}}

\makeatletter
\newcommand*{\rom}[1]{\romannumeral #1}
\newcommand*{\Rom}[1]{\expandafter\@slowromancap\romannumeral #1@}

\newcommand{\n}{\mathbb{N}}

\newtheorem{cor}{Corollary}
\newtheorem{lemma}{Lemma}
\newtheorem*{lemma*}{Lemma}

\newtheorem{theorem}{Theorem}

\newtheorem*{proposition*}{Proposition}
\newtheorem*{theorem*}{Theorem}
\newtheorem*{definition*}{Definition}
\newtheorem{example}{Example}
\newtheorem*{example*}{Example}

\newtheorem*{remark*}{Remark}

\newtheorem*{assumption*}{Assumption}

\newtheorem*{condition*}{Condition}

\newtheorem*{question*}{Question}
\newtheorem{claim}{Claim}
\newtheorem*{claim*}{Claim}

\newtheorem*{step*}{Step}

\newtheorem{fact}{Fact}
\newtheorem*{fact*}{Fact}

\begin{document}

\title{The Crawler: Three Equivalence Results \\ for Object (Re)allocation Problems \\ when Preferences Are Single-peaked\thanks{We are grateful to William Thomson for his invaluable advice and support, and for his patient reviews and detailed suggestions.  We thank Maciej Kotowski for his comments.  We thank seminar participants at the 2019 Ottawa Microeconomic Theory Workshop, University of Rochester, Osaka University, Keio University, and Waseda University.  Finally, we thank an associate editor and anonymous referees for their valuable comments and suggestions.  Tamura is grateful for financial support from Tamkeen under the NYUAD Research Institute award for Project CG005. Also, Tamura thanks the Institute of Social and Economic Research at Osaka University for their summer research support. This paper is based on a chapter of Tamura's Ph.D. dissertation at the University of Rochester. Hosseini was partially supported by NSF grant \#2052488.  All remaining errors are our own.}}
\author{Yuki Tamura\footnote{Center for Behavioral Institutional Design, New York University Abu Dhabi; email:{\tt  yuki.tamura@nyu.edu}} \and Hadi Hosseini\footnote{College of Information Sciences and Technology, Pennsylvania State University; email:{\tt hadi@psu.edu}}}
\date{\today \vskip 0.2cm
}
\maketitle
\begin{abstract}
For object reallocation problems, if preferences are strict but otherwise unrestricted, the Top Trading Cycles rule (TTC) is the leading rule: It is the only rule satisfying efficiency, individual rationality, and strategy-proofness. However, on the subdomain of single-peaked preferences, \cite{bade19} defines a new rule, the “crawler”, which also satisfies these three properties. (i) The crawler selects an allocation by ``visiting" agents in a specific order. A natural ``dual" rule can be defined by proceeding in the reverse order. Our first theorem states that the crawler and its dual are actually the same. (ii) Single-peakedness of a preference profile may in fact hold for more than one order and its reverse. Our second theorem states that the crawler is invariant to the choice of the order. (iii) For object allocation problems (as opposed to reallocation problems), we define a probabilistic version of the crawler by choosing an endowment profile at random according to a uniform distribution, and applying the original definition.  Our third theorem states that this rule is the same as the ``random priority rule".  
\\[2ex]
\noindent \textbf{Keywords:} object reallocation problems, single-peaked preferences, the crawler, the random priority rule.
\\ \textbf{JEL classification:} C78, D47.  
\end{abstract}


\section{Introduction}

Consider a group of agents each of whom is endowed with an indivisible good, called an ``object".  Each agent has preferences over the objects.  The initial allocation may not be efficient (in the sense of Pareto efficiency) and the issue arises of reallocating the objects so as to achieve efficiency as well as possibly other socially desirable properties. An example of this type of problems is when the agents are households and the objects are housing units they own (the stylized ``housing" market of \cite{ss74}).  A rule is a single-valued mapping that associates with each such problem an allocation, interpreted as a recommendation for the problem. 
If preferences are strict but otherwise unrestricted, the Top Trading Cycles rule (TTC) is the leading rule \citep{ss74}: It is the \textit{only} rule satisfying the three desirable properties of ``efficiency", ``individual rationality",\footnote{Another common name for this property is the ``endowment lower bound".} and ``strategy-proofness" \citep{ma94}.\footnote{Other proofs of this uniqueness result can be found in \cite{svensson99}, \cite{anno15}, \cite{sethuraman16}, and \cite{bade19}.} 

Interestingly, TTC is not the only rule satisfying these properties on the subdomain of ``single-peaked" preferences \citep{bade19}.  
Returning to our example of a housing market, suppose that the housing units are of different sizes and that each household evaluates units based on their size.  A single person may prefer a small unit; a family with children may prefer a large one.  Each household has an ideal size; the further the size of a unit is from this ideal size, in either direction, the less desirable the unit is.  Thus, households have single-peaked preferences with respect to size.  Instead of size, the order could be based on how expensive units are, or on their proximity to a school or to the central business district.  Many other examples can be found where agents have single-peaked preferences with respect to some reference order on the object set.\footnote{Accordingly, this domain has been studied from several other viewpoints (\citealp{liu18}; \citealp{dbcm15}; \citealp{bmrs19}).  Moreover, for other types of resource allocation problems, a single-peakedness is a natural assumption.  An example is when an infinitely divisible commodity has to be fully allocated among a group of agents \citep{sprumont91}.}  

On the single-peaked domain, \cite{bade19} defines a new rule, which she calls the ``crawler", and shows that this rule, as TTC does, satisfies efficiency, individual rationality, and strategy-proofness.    
The idea underlying the crawler is as follows.  Objects are labeled in such a way that preferences are single-peaked with respect to this order.  Similarly, agents are ordered according to their ownership of the objects.  They are visited from left to right and each agent in turn is asked if he most prefers his endowment or an object to the left of his endowment.  If he answers yes, he is asked which object he most prefers. He is assigned that object and leaves.  Otherwise, the agent on his immediate right is asked the same question.  At least one agent has to most prefer his endowment or an object to the left of his endowment.  So some agent is eventually assigned an object, and he leaves with his assignment.  Then the problem is updated as follows.  Consider all of the agents whose endowments are between the object assigned to the agent who left and the object that agent owned (if the agent is not assigned his endowment).  For each of those agents, ownership is shifted by one spot to the right.  
The sweeping process is repeated in the updated problem, and continues until every agent has been assigned an object.  

Obviously, one could define a ``dual" rule by visiting agents from right to left \citep{bade19}.  One would expect the crawler and its dual to differ.  Because each of the two crawlers visits agents in a specific order, one may conjecture that the specific order that is chosen confers a particular advantage to some agents based on the location of their endowments relative to the location of the other agents' endowments.  Our first result is that they are in fact the same (Theorem~\ref{symmetry}).  Thus, whether agents are visited from left to right or from right to left makes no difference.

Moreover, there may be multiple orders over the object set with respect to which a given preference profile is single-peaked.\footnote{\cite{elo08} define an algorithm that identifies all of the orders over an object set with respect to which a preference profile is single-peaked.}  The crawler could select different allocations for these various orders.  Then again, depending on which order we choose, certain agents would be favored at the expense of others.  Our second result is that here too the crawler is in fact invariant to the order over the object set for which single-peakedness holds (Theorem~\ref{order_inv}).  

Another important class of problems is ``object allocation problems"; there, instead of being owned individually, objects are owned collectively.  Well-studied rules for object allocation problems are the ``sequential priority rules"\footnote{Another common name for these rules is ``serial dictatorship".}: To each order on the agent set is associated such a rule: The agent who is first is assigned his most preferred object; the agent who is second is assigned his most preferred object among the remaining ones; and so on.  
We may ask whether the crawler bears some relation to the sequential priority rules.  Because the procedures underlying the definitions of the crawler and of the sequential priority rules are based on completely different considerations, one may doubt that such a relation exists.  Yet, our third theorem provides a positive answer to this question.  It is based on allowing a rule to select a probability distribution over allocations. That is, the rule is probabilistic. 

Given a preference profile, let us select an endowment profile at random according to a uniform distribution, and apply the crawler to the induced object reallocation problem.
We call the probabilistic rule so defined the ``crawler from random endowments".  We prove that the crawler from random endowments is the same probabilistic rule as ``the random priority rule" \citep{as98}: Choose an order on the agent set at random according to a uniform distribution and apply the induced sequential priority rule (Theorem \ref{thm_random}).\footnote{Another common name for the random priority rule is the ``random serial dictatorship".} 

Equivalence results in the same vein already provided in the literature (\citealp{knuth96}; \citealp{as98}; \citealp{ps11}; \citealp{ls11}; \citealp{su05}; \citealp{ekici17}; \citealp{carroll14}; \citealp{bade19equiv}).
Each of them states that a probabilistic version of (a generalized) TTC is equivalent to the random priority rule (or a variant).  However, as we show in Section \ref{dis}, our result cannot be deduced from any of these equivalences.  

This paper is organized as follows.  In Section \ref{model}, we define the model.  We formally define the crawler. Also, we define a dual rule by visiting agents from right to left as opposed to from left to right, and state our first equivalence result: the crawler and this dual rule are the same.  Our second equivalence result is that the crawler is invariant to the order over the object set that preserves single-peakedness.  In Section \ref{prob}, we define the crawler from random endowments, and state our third equivalence result: the crawler from random endowments is the same as the random priority rule.  Proofs are collected in the appendix.

\section{Model}\label{model}
There is a set $N = \{1,2,\hdots,n\}$ of agents and a set $O$ of objects ($|O| = n$). 
Each agent is endowed with one object in $O$, no two agents being endowed with the same object.     
We denote by $\omega$ the endowment profile, i.e., $\omega = (\omega_1,\omega_2,\hdots,\omega_n)$ where the $i^{th}$ coordinate of $\omega$ is the object owned by agent $i$.  
Each agent $i \in N$ has strict preferences $P_i$ over $O$.  Let $\mathcal{P}$ be the set of all preferences.  We write $R_i$ to denote the ``at least as desirable as" relation associated with $P_i$.  That is, for each pair $o,o' \in O$, $o~R_i~o'$ if and only if either $o~P_i~o'$ or $o = o'$.  We represent $P_i$ by an ordered list of the objects, such as
\begin{align*}
P_i:~o,~\tilde{o},~o',\hdots. 
\end{align*}
Let $\mathcal{P}^N$ be the set of preference profiles for $N$.  Our generic notation for a preference profile is $P = (P_i)_{i \in N}$. 

A \textbf{problem} is defined by a preference profile and an endowment profile.
An allocation is a list $x~=~(x_1,x_2,\hdots,x_n)$ such that for each $i \in N$, $x_i \in O$, and for each pair $i,j \in N$ such that $i \neq j$, $x_i \neq x_j$.  
Let $\mathcal{X}$ be the set of allocations.  
A \textbf{rule} is a single-valued mapping $\varphi:\mathcal{P}^N \times \mathcal{X} \rightarrow \mathcal{X}$ that associates with each problem $(P,\omega) \in \mathcal{P}^N \times \mathcal{X}$ an allocation $x \in \mathcal{X}$.  


\bigskip
 
Let $\mathcal{L}$ be the set of strict orders on $O$. 
We consider the following restriction on preference profiles. There is an order $\prec \in \mathcal{L}$ such that for each agent, there is an object with the property that the further with respect to $\prec$ an object is from that distinguished object, in either direction, the worse off he is.  Formally, a preference profile $P$ is \textbf{single-peaked} if there is $\prec \in \mathcal{L}$ such that for each $i \in N$, there is an object, which we denote by $p(P_i)$, such that for each pair $o,o' \in O$, if either $o' \prec o \precsim p(P_i)$ or $p(P_i) \precsim o \prec o'$, then $o~P_i~o'$. 

Throughout, we consider preference profiles that are single-peaked with respect to some reference order on the object set. 
Given $\prec \in \mathcal{L}$, for each $o \in O$, we denote by $o-1$ and $o+1$ the objects that are adjacent to object $o$.\footnote{Obviously, the leftmost object has no object to its left, and the rightmost object has no object to its right.}  
Likewise, given $\prec \in \mathcal{L}$, for each $i \in N$, we denote by $i-1$ and $i+1$ the agents whose endowments are adjacent to that of agent~$i$.\footnote{The agent whose endowment is leftmost has no one to his left, and the agent whose endowment is rightmost has no one to his right.}  Moreover, for each pair $i,j \in N$, if agent $j$ is to the right of agent $i$, we write that $i \prec j$.  

Let $f: \{1,2,\cdots,n\} \rightarrow N$ be a bijection and let $f = (f(1),f(2),\cdots,f(n))$ be the resulting strict order on $N$.  Let $\mathcal{F}$ be the set of all orders on $N$. 
For each $i \in N$, each $P_i \in \mathcal{P}$, and each $O' \subseteq O$, let $X_i(O')$ be the most preferred object of agent $i$ in $O'$ at $P_i$, i.e.,
\[X_i(O') = o~\iff~o \in O'~\text{and}~\text{for each}~o' \in O' \backslash \{o\},~o~P_i~o'.\] 

We conclude this section by defining three basic properties of rules.  Recall that $\varphi$ is our generic notation for a rule.  
First, we require that for each problem, the chosen allocation be such that there is no other allocation that all agents find at least as desirable and at least one agent prefers:

\paragraph{Efficiency:} For each $(P,\omega) \in \mathcal{P}^N \times \mathcal{X}$, there is no $x \in \mathcal{X}$ such that (i) for each $i \in N$, $x_i~R_i~\varphi_i(P,\omega)$, and (ii) there is $j \in N$ such that $x_j~P_j~\varphi_j(P,\omega)$. 

\bigskip

Second, we require that for each problem, each agent find his assignment at least as desirable as his endowment:
\paragraph{Individual rationality:} For each $(P,\omega) \in \mathcal{P}^N \times \mathcal{X}$ and each $i \in N$,
\[\varphi_i(P,\omega)~R_i~\omega_i.\]

\bigskip

Third, we require that no agent ever benefit by misrepresenting his preferences:  
\paragraph{Strategy-proofness:} For each $(P,\omega) \in \mathcal{P}^N \times \mathcal{X}$, each $i \in N$, and each $P'_i \in \mathcal{P}$, 
\[\varphi_i((P_i,P_{-i}),\omega)~R_i~\varphi_i((P'_i,P_{-i}),\omega).\]

\subsection{The crawler}

In the context of object reallocation problems with strict preferences, TTC has been the central rule in the literature: It is the only rule satisfying \textit{efficiency}, \textit{individual rationality}, and \textit{strategy-proofness} \citep{ma94}.  However, on the single-peaked domain, TTC is not the only rule satisfying these properties.  \cite{bade19} defines a new rule for this domain, and shows that this rule also satisfies these three properties.    

The idea underlying the rule is as follows. Objects are labeled in such a way that preferences are single-peaked with respect to the order.  Agents are visited according to the way the objects they own are ordered, from left to right, we say ``in ascending order".  Each agent in turn is asked if he most prefers his endowment or an object to the left of his endowment.  If yes, he is further asked which specific object it is.  Depending on his answer, we take one of the following actions:

\noindent(1) If an agent's most preferred object is to the right of his endowment, we move to the next agent.

\noindent(2) If an agent's most preferred object is his endowment or an object to the left of his endowment, he receives his most preferred object and leaves with it. 
 
\noindent When an agent leaves, the problem is updated as follows.  Consider agents whose endowments are  between the object assigned to the agent who left and the object that agent owned.  The ownership of each such agent is shifted by one spot to the right.  Hence, each of these agents owns a new object.  The sweeping procedure is  repeated in the updated problem.  In anticipation of a forthcoming definition, we refer to the rule just defined as the \textbf{ascending crawler}. We denote it by $\bm{ACR}$.  

The formal description is as follows. 

\paragraph{Ascending crawler, $\bm{ACR}$:} 
Let $\prec \in \mathcal{L}$, $P \in \mathcal{P}^N$, where $P$ is single-peaked with respect to $\prec$, and $\omega \in \mathcal{X}$.   

Label the objects in $O^0 \equiv O$ in such a way that for each $t~\in~\{1,\hdots,~n~-~1~\}$, $o_t \prec o_{t+1}$.  Label the agents in $N^0 \equiv N$ in such a way that for each $t \in \{1,\hdots,n-1\}$, we have $\omega_{i_t} \prec \omega_{i_{t+1}}$.
Let $\widehat{O}^0 = \{o_1,\hdots,o_n\}$ and $\widehat{N}^0 = \{i_1,\hdots,i_n\}$.  

At each Step $k \geq 1$, let 
\[k^* \equiv \min_{\{1,\cdots,n-k\}} \left\{t:~o_t~P_{i_t}~o_{t+1},~\text{where}~o_t,o_{t+1} \in \widehat{O}^{k-1}\right\}.\] 
Let $ACR_{i_{k^*}}(P,\omega) = X_{i_{k^*}}(O^{k-1})$. 
Let $O^k \equiv O^{k-1} \backslash \{X_{i_{k^*}}(O^{k-1})\}$ and $N^k \equiv N^{k-1} \backslash \{i_{k^*}\}$.

Label the objects in $O^k$ in such a way that for each $t \in \{1,\hdots,~n-k-1\}$, $o_t \prec o_{t+1}$.  Label the agents in $N^k$ in such a way that for each $t \in \{1,\hdots,n-k-1\}$, at Step $k-1$, we have $i_t \prec i_{t+1}$.  Let $\widehat{O}^k = \{o_1,\hdots,o_{n-k}\}$ and $\widehat{N}^k = \{i_1,\hdots,i_{n-k}\}$. 

The procedure terminates when no agent remains.  Because there are finite numbers of agents, there are finitely many steps.

\begin{example}\label{ex_illu}
\textit{Illustrating the ascending crawler.}
Let $N = \{1,2,3,4\}$.  Let $\omega_1~\prec~\omega_2~\prec~\omega_3~\prec~\omega_4$.
Let $P \in \mathcal{P}^N$ be defined by 
\begin{align*}
P_1&: \omega_4,~\omega_3,~\omega_2,~\omega_1 \\
P_2&: \omega_2,~\omega_1,~\omega_3,~\omega_4 \\
P_3&: \omega_1,~\omega_2,~\omega_3,~\omega_4 \\
P_4&: \omega_2,~\omega_1,~\omega_3,~\omega_4.
\end{align*}
At each step, we ask the following question to the agent we visit;

\begin{centering}
\textit{`Among the available objects, is your most preferred object either your endowment or an object to the left of your endowment? If the answer is yes, which object is it?'}\footnote{For the agent whose endowment is leftmost, we ask `among the available objects, is your most preferred object your endowment?'}
\end{centering}
\begin{itemize}
\item[]\textbf{Step 1}: Agent 1 is queried first.  Because his answer is no, agent 2 is queried next.  Agent 2 answers yes, and that he most prefers his endowment.  Hence, he receives his endowment.  He leaves with it. 
\item[]\textbf{Step 2}: Agent 1 is queried first.  Because his answer is no, agent 3 is queried next.  Agent~3 answers yes, and that he most prefers $\omega_1$.  Hence, he receives that object and leaves with it.  The ownership of agent 1 is shifted by one spot to the right.  
\item[]\textbf{Step 3}: We repeat the sequence of queries.  Agent 4 is the fist agent who answers yes to the first question, and he most prefers object $\omega_3$.  Hence, he receives that object and leaves with it.  The ownership of agent 1 is shifted by one spot to the right.  
\item[]\textbf{Step 4}: Agent 1 is the only agent who has not been assigned an object yet and object $\omega_4$ is the only available object.  Hence, agent 1 receives object $\omega_4$ and leaves with it.  
\end{itemize}
Because no agent remains, the algorithm terminates, yielding 
\begin{align*}
ACR(P,\omega) = (\omega_4,\omega_2,\omega_1,\omega_3).  
\end{align*}
Figure \ref{CRexample} illustrates the process.  At each step, the agent who is assigned an object at that step is circled.  At each step, the agents who have already been assigned objects are shown in boxes.  
\begin{figure}[H]
\centering
\tikzset{every picture/.style={line width=0.75pt}} 

\begin{tikzpicture}[x=0.75pt,y=0.75pt,yscale=-1,xscale=1]

\draw    (100,120) -- (390,120) ;

\draw    (120,115) -- (120,125) ;

\draw    (200,115) -- (200,125) ;

\draw    (280,115) -- (280,125) ;

\draw    (360,115) -- (360,125) ;

\draw    (119,11.33) .. controls (147.67,11.33) and (133.67,20) .. (162.33,22) ;

\draw    (159.67,50) -- (199,42) ;

\draw    (199,42) .. controls (217.67,47.33) and (217,59.33) .. (229,64) ;

\draw    (319.67,42.67) -- (359,31.33) ;

\draw   (189.67,211.67) .. controls (189.67,205.22) and (194.89,200) .. (201.33,200) .. controls (207.78,200) and (213,205.22) .. (213,211.67) .. controls (213,218.11) and (207.78,223.33) .. (201.33,223.33) .. controls (194.89,223.33) and (189.67,218.11) .. (189.67,211.67) -- cycle ;
\draw   (269,260) .. controls (269,253.56) and (274.22,248.33) .. (280.67,248.33) .. controls (287.11,248.33) and (292.33,253.56) .. (292.33,260) .. controls (292.33,266.44) and (287.11,271.67) .. (280.67,271.67) .. controls (274.22,271.67) and (269,266.44) .. (269,260) -- cycle ;
\draw   (191,252) -- (212.67,252) -- (212.67,273.67) -- (191,273.67) -- cycle ;
\draw    (280.67,271.67) .. controls (211.7,299.72) and (173.76,292.16) .. (121.26,272.6) ;
\draw [shift={(119.67,272)}, rotate = 380.56] [color={rgb, 255:red, 0; green, 0; blue, 0 }  ][line width=0.75]    (10.93,-3.29) .. controls (6.95,-1.4) and (3.31,-0.3) .. (0,0) .. controls (3.31,0.3) and (6.95,1.4) .. (10.93,3.29)   ;

\draw  [dash pattern={on 4.5pt off 4.5pt}]  (119.67,250.67) .. controls (178.77,232.94) and (244.33,234.61) .. (276.23,244.86) ;
\draw [shift={(277.67,245.33)}, rotate = 198.8] [color={rgb, 255:red, 0; green, 0; blue, 0 }  ][line width=0.75]    (10.93,-3.29) .. controls (6.95,-1.4) and (3.31,-0.3) .. (0,0) .. controls (3.31,0.3) and (6.95,1.4) .. (10.93,3.29)   ;

\draw   (192.33,308.67) -- (214,308.67) -- (214,330.33) -- (192.33,330.33) -- cycle ;
\draw   (112.33,308.67) -- (134,308.67) -- (134,330.33) -- (112.33,330.33) -- cycle ;
\draw   (350.33,318.33) .. controls (350.33,311.89) and (355.56,306.67) .. (362,306.67) .. controls (368.44,306.67) and (373.67,311.89) .. (373.67,318.33) .. controls (373.67,324.78) and (368.44,330) .. (362,330) .. controls (355.56,330) and (350.33,324.78) .. (350.33,318.33) -- cycle ;
\draw    (362,330) .. controls (337.38,343.11) and (318.89,342.01) .. (283.3,331.15) ;
\draw [shift={(281.67,330.64)}, rotate = 377.18] [color={rgb, 255:red, 0; green, 0; blue, 0 }  ][line width=0.75]    (10.93,-3.29) .. controls (6.95,-1.4) and (3.31,-0.3) .. (0,0) .. controls (3.31,0.3) and (6.95,1.4) .. (10.93,3.29)   ;

\draw  [dash pattern={on 4.5pt off 4.5pt}]  (281.67,303.33) .. controls (311.72,292.86) and (326.41,292.01) .. (357.1,301.39) ;
\draw [shift={(359,301.98)}, rotate = 197.35] [color={rgb, 255:red, 0; green, 0; blue, 0 }  ][line width=0.75]    (10.93,-3.29) .. controls (6.95,-1.4) and (3.31,-0.3) .. (0,0) .. controls (3.31,0.3) and (6.95,1.4) .. (10.93,3.29)   ;

\draw   (191.67,358.67) -- (213.33,358.67) -- (213.33,380.33) -- (191.67,380.33) -- cycle ;
\draw   (111.67,358.67) -- (133.33,358.67) -- (133.33,380.33) -- (111.67,380.33) -- cycle ;
\draw   (349.67,368.33) .. controls (349.67,361.89) and (354.89,356.67) .. (361.33,356.67) .. controls (367.78,356.67) and (373,361.89) .. (373,368.33) .. controls (373,374.78) and (367.78,380) .. (361.33,380) .. controls (354.89,380) and (349.67,374.78) .. (349.67,368.33) -- cycle ;
\draw   (270.33,358.67) -- (292,358.67) -- (292,380.33) -- (270.33,380.33) -- cycle ;

\draw (112,16) node [scale=0.8] [align=left] {$\displaystyle P_{3}$};
\draw (221.33,28) node [scale=0.8] [align=left] {$\displaystyle P_{2} \ P_{4}$};
\draw (373.33,28.33) node [scale=0.8] [align=left] {$\displaystyle P_{1}$};
\draw (66.67,150.33) node [scale=0.8] [align=left] {object};
\draw (120,150) node [scale=0.8] [align=left] {1};
\draw (202,150) node [scale=0.8] [align=left] {2};
\draw (280,148.67) node [scale=0.8] [align=left] {3};
\draw (360.67,149.33) node [scale=0.8] [align=left] {4};
\draw (66.67,177) node [scale=0.8] [align=left] {agent};
\draw (120,176.67) node [scale=0.8] [align=left] {1};
\draw (202,176.67) node [scale=0.8] [align=left] {2};
\draw (280,175.33) node [scale=0.8] [align=left] {3};
\draw (360.67,176) node [scale=0.8] [align=left] {4};
\draw (66.67,212.33) node [scale=0.8] [align=left] {step 1};
\draw (120,212) node [scale=0.8] [align=left] {1};
\draw (202,212) node [scale=0.8] [align=left] {2};
\draw (280,210.67) node [scale=0.8] [align=left] {3};
\draw (360.67,211.33) node [scale=0.8] [align=left] {4};
\draw (67.33,262.67) node [scale=0.8] [align=left] {step 2};
\draw (120.67,262.33) node [scale=0.8] [align=left] {1};
\draw (202.67,262.33) node [scale=0.8] [align=left] {2};
\draw (280.67,261) node [scale=0.8] [align=left] {3};
\draw (361.33,261.67) node [scale=0.8] [align=left] {4};
\draw (66.67,318.33) node [scale=0.8] [align=left] {step 3};
\draw (204,319) node [scale=0.8] [align=left] {2};
\draw (282,317.67) node [scale=0.8] [align=left] {1};
\draw (362.67,318.33) node [scale=0.8] [align=left] {4};
\draw (123.17,319.5) node [scale=0.8] [align=left] {3};
\draw (66,368.33) node [scale=0.8] [align=left] {step 4};
\draw (203.33,369) node [scale=0.8] [align=left] {2};
\draw (362,368.33) node [scale=0.8] [align=left] {1};
\draw (122.5,369.5) node [scale=0.8] [align=left] {3};
\draw (282,369) node [scale=0.8] [align=left] {4};

\end{tikzpicture}
\caption{The ascending crawler.}
\label{CRexample}
\end{figure}
\end{example}

We restate a result due to \cite{bade19}.
\begin{theorem*}
The ascending crawler is \textit{efficient}, \textit{individually rational}, and \textit{strategy-proof}. 
\end{theorem*}

\subsection{The first equivalence result: Equivalence between the ascending crawler and the descending crawler} 

Of course, as pointed out by \cite{bade19}, a ``dual" rule can be defined by visiting agents from right to left, let us say ``in descending order". 
Let us call this dual rule the \textbf{descending crawler}.  We denote it by $\bm{DCR}$. 
We omit the formal definition and only show how to apply it to our Example \ref{ex_illu}. 

\addtocounter{example}{-1}
\begin{example}[Continued]
\textit{Illustrating the descending crawler.}
At each step, we ask the following question to the agent we visit;

\begin{centering}
\textit{`Among the available objects, is your most preferred object either your endowment or an object to the right of your endowment? If the answer is yes, which object is it?'}\footnote{For the agent whose endowment is rightmost, we ask `among the available objects, is your most preferred object your endowment?'}
\end{centering}
\begin{itemize}
\item[]\textbf{Step 1}: Agent 4 is queried first.  Because his answer is no, agent 3 is queried next.  Because his answer is no, agent 2 is queried next.  Agent 2 answers yes, and that he most prefers his endowment.  Hence, he receives his endowment and leaves with it.  
\item[]\textbf{Step 2}: Agent 4 is queried first.  Because his answer is no, agent 3 is queried next.  Because his answer is no, agent 1 is queried next.  Agent 1 answers yes, and that he most prefers object $\omega_4$.  Hence, he receives that object and leaves with it.  The ownership of each of agents 3 and 4 is shifted by one spot to the left.
\item[]\textbf{Step 3}: We repeat the sequence of queries.  Agent 3 is the first agent who answers yes to the first question, and he most prefers object $\omega_1$.  Hence, he receives that object and leaves with it.  
\item[]\textbf{Step 4}: Agent 4 is the only agent who has not been assigned an object yet and object $\omega_3$ is the only available object.  Hence, agent 4 receives object $\omega_3$ and leaves with it.  
\end{itemize}
Because no agent remains, the algorithm terminates, yielding 
\begin{align*}
DCR(P,\omega) = (\omega_4,\omega_2,\omega_1,\omega_3).  
\end{align*}
\begin{figure}[H]
\centering
\tikzset{every picture/.style={line width=0.75pt}} 

\begin{tikzpicture}[x=0.75pt,y=0.75pt,yscale=-1,xscale=1]

\draw    (100,120) -- (390,120) ;

\draw    (120,115) -- (120,125) ;

\draw    (200,115) -- (200,125) ;

\draw    (280,115) -- (280,125) ;

\draw    (360,115) -- (360,125) ;

\draw    (119,11.33) .. controls (147.67,11.33) and (133.67,20) .. (162.33,22) ;

\draw    (159.67,50) -- (199,42) ;

\draw    (199,42) .. controls (217.67,47.33) and (217,59.33) .. (229,64) ;

\draw    (319.67,42.67) -- (359,31.33) ;

\draw   (189.67,211.67) .. controls (189.67,205.22) and (194.89,200) .. (201.33,200) .. controls (207.78,200) and (213,205.22) .. (213,211.67) .. controls (213,218.11) and (207.78,223.33) .. (201.33,223.33) .. controls (194.89,223.33) and (189.67,218.11) .. (189.67,211.67) -- cycle ;
\draw   (108,262.33) .. controls (108,255.89) and (113.22,250.67) .. (119.67,250.67) .. controls (126.11,250.67) and (131.33,255.89) .. (131.33,262.33) .. controls (131.33,268.78) and (126.11,274) .. (119.67,274) .. controls (113.22,274) and (108,268.78) .. (108,262.33) -- cycle ;
\draw   (191,252) -- (212.67,252) -- (212.67,273.67) -- (191,273.67) -- cycle ;
\draw    (119.67,274) .. controls (168.84,302.55) and (311.12,289.76) .. (357.3,273.33) ;
\draw [shift={(358.67,272.83)}, rotate = 519.54] [color={rgb, 255:red, 0; green, 0; blue, 0 }  ][line width=0.75]    (10.93,-3.29) .. controls (6.95,-1.4) and (3.31,-0.3) .. (0,0) .. controls (3.31,0.3) and (6.95,1.4) .. (10.93,3.29)   ;

\draw  [dash pattern={on 4.5pt off 4.5pt}]  (278,246.83) .. controls (241.7,231.65) and (173.38,232.81) .. (123.18,245.77) ;
\draw [shift={(121.67,246.17)}, rotate = 345.15999999999997] [color={rgb, 255:red, 0; green, 0; blue, 0 }  ][line width=0.75]    (10.93,-3.29) .. controls (6.95,-1.4) and (3.31,-0.3) .. (0,0) .. controls (3.31,0.3) and (6.95,1.4) .. (10.93,3.29)   ;

\draw   (192.33,308.67) -- (214,308.67) -- (214,330.33) -- (192.33,330.33) -- cycle ;
\draw   (351,308) -- (372.67,308) -- (372.67,329.67) -- (351,329.67) -- cycle ;
\draw   (111,319.67) .. controls (111,313.22) and (116.22,308) .. (122.67,308) .. controls (129.11,308) and (134.33,313.22) .. (134.33,319.67) .. controls (134.33,326.11) and (129.11,331.33) .. (122.67,331.33) .. controls (116.22,331.33) and (111,326.11) .. (111,319.67) -- cycle ;
\draw   (191.67,358.67) -- (213.33,358.67) -- (213.33,380.33) -- (191.67,380.33) -- cycle ;
\draw   (111.67,358.67) -- (133.33,358.67) -- (133.33,380.33) -- (111.67,380.33) -- cycle ;
\draw   (269,369) .. controls (269,362.56) and (274.22,357.33) .. (280.67,357.33) .. controls (287.11,357.33) and (292.33,362.56) .. (292.33,369) .. controls (292.33,375.44) and (287.11,380.67) .. (280.67,380.67) .. controls (274.22,380.67) and (269,375.44) .. (269,369) -- cycle ;
\draw   (351,357.33) -- (372.67,357.33) -- (372.67,379) -- (351,379) -- cycle ;
\draw  [dash pattern={on 4.5pt off 4.5pt}]  (359.67,246.83) .. controls (334.19,234.42) and (309.35,235.45) .. (284.52,248.05) ;
\draw [shift={(283,248.83)}, rotate = 332.24] [color={rgb, 255:red, 0; green, 0; blue, 0 }  ][line width=0.75]    (10.93,-3.29) .. controls (6.95,-1.4) and (3.31,-0.3) .. (0,0) .. controls (3.31,0.3) and (6.95,1.4) .. (10.93,3.29)   ;

\draw (112,16) node [scale=0.8] [align=left] {$\displaystyle P_{3}$};
\draw (221.33,28) node [scale=0.8] [align=left] {$\displaystyle P_{2} \ P_{4}$};
\draw (373.33,28.33) node [scale=0.8] [align=left] {$\displaystyle P_{1}$};
\draw (66.67,150.33) node [scale=0.8] [align=left] {object};
\draw (120,150) node [scale=0.8] [align=left] {1};
\draw (202,150) node [scale=0.8] [align=left] {2};
\draw (280,148.67) node [scale=0.8] [align=left] {3};
\draw (360.67,149.33) node [scale=0.8] [align=left] {4};
\draw (66.67,177) node [scale=0.8] [align=left] {agent};
\draw (120,176.67) node [scale=0.8] [align=left] {1};
\draw (202,176.67) node [scale=0.8] [align=left] {2};
\draw (280,175.33) node [scale=0.8] [align=left] {3};
\draw (360.67,176) node [scale=0.8] [align=left] {4};
\draw (66.67,212.33) node [scale=0.8] [align=left] {step 1};
\draw (120,212) node [scale=0.8] [align=left] {1};
\draw (202,212) node [scale=0.8] [align=left] {2};
\draw (280,210.67) node [scale=0.8] [align=left] {3};
\draw (360.67,211.33) node [scale=0.8] [align=left] {4};
\draw (67.33,262.67) node [scale=0.8] [align=left] {step 2};
\draw (120.67,262.33) node [scale=0.8] [align=left] {1};
\draw (202.67,262.33) node [scale=0.8] [align=left] {2};
\draw (280.67,261) node [scale=0.8] [align=left] {3};
\draw (361.33,261.67) node [scale=0.8] [align=left] {4};
\draw (66.67,318.33) node [scale=0.8] [align=left] {step 3};
\draw (204,319) node [scale=0.8] [align=left] {2};
\draw (282,317.67) node [scale=0.8] [align=left] {4};
\draw (362.67,318.33) node [scale=0.8] [align=left] {1};
\draw (123.17,319.5) node [scale=0.8] [align=left] {3};
\draw (66,368.33) node [scale=0.8] [align=left] {step 4};
\draw (203.33,369) node [scale=0.8] [align=left] {2};
\draw (362,368.33) node [scale=0.8] [align=left] {1};
\draw (122.5,369.5) node [scale=0.8] [align=left] {3};
\draw (282,369) node [scale=0.8] [align=left] {4};

\end{tikzpicture}
\caption{The descending crawler.}
\label{CRexamplecont}
\end{figure}
\end{example} 

In the example, the assignments obtained by applying the two rules are the same.  In general, one would expect the two rules to select different allocations.  However, it turns out that these allocations are always the same:  

\begin{theorem}\label{symmetry}
The ascending crawler is the same rule as the descending crawler.  That is, $ACR = DCR$.  
\end{theorem}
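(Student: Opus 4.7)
We proceed by induction on the number of agents $n$. The base case $n=1$ is immediate, since only one allocation is feasible.

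For the inductive step, assume $ACR=DCR$ for all single-peaked problems with fewer than $n$ agents. Consider a problem $(P,\omega)$ with $n\geq 2$ agents, and let $i^*=i_{k^*}$ be the agent selected at step~1 of $ACR$, with assignment $o^*=p(P_{i^*})$. By construction of $ACR$, every agent whose endowment lies strictly to the left of $\omega_{i^*}$ has a peak strictly to the right of his own endowment (a ``right-walker''), and $o^*\preceq\omega_{i^*}$. The plan hinges on the \textbf{Key Lemma}: $DCR_{i^*}(P,\omega)=o^*$ as well. Given the Key Lemma, form the reduced problem $(P',\omega')$ by removing $i^*$ and $o^*$ and shifting the endowments of agents at positions $s,\ldots,k^*-1$ one step to the right (where $s$ is the position of $o^*$); this problem has $n-1$ agents, is single-peaked with respect to the restricted order, and satisfies $ACR(P,\omega)|_{N\setminus\{i^*\}}=ACR(P',\omega')$ by the very definition of $ACR$. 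A parallel verification establishes the analogous commutation for $DCR$, namely $DCR(P,\omega)|_{N\setminus\{i^*\}}=DCR(P',\omega')$. The inductive hypothesis on $(P',\omega')$ then completes the argument.

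The main obstacle is the \textbf{Key Lemma}. The structural intuition is as follows. Any agent selected by $DCR$ strictly to the right of $\omega_{i^*}$ receives, by single-peakedness, a peak in the set of remaining objects that lies weakly to the right of his current endowment, hence weakly to the right of $\omega_{i^*}$ and therefore strictly to the right of $o^*$; so such an agent does not take $o^*$. The agents weakly to the left of $\omega_{i^*}$ are originally right-walkers, so initially none of them wants $o^*$; they can in principle turn into left-crawlers later in $DCR$ (after several leftward shifts caused by right-side selections), but a shift-tracking inner induction shows that at the step when $DCR$ first operates purely on the region containing $o^*$, the agent $i^*$ himself is the rightmost remaining agent with peak at or to the right of his (possibly shifted) endowment, whence $DCR$ selects $i^*$ and assigns him $o^*$. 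The commutation statement for $DCR$ with the $ACR$-reduction follows from the same shift-tracking analysis, since extracting the pair $(i^*,o^*)$ from the $DCR$ execution on $(P,\omega)$ leaves a sequence of selections that coincides step-by-step with the $DCR$ execution on $(P',\omega')$. The heart of the work is formalizing this shift-tracking: matching the effect of $ACR$'s single right-shift on $\{s,\ldots,k^*-1\}$ with the cumulative effect of $DCR$'s leftward shifts on the right region, and verifying that the two reduced states coincide.
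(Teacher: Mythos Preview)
Your approach is genuinely different from the paper's. The paper does \emph{not} induct on $n$. Instead, it partitions agents according to whether their $ACR$-assignment lies strictly to the right of, strictly to the left of, or at their endowment, and handles each class separately using only two ingredients: that both crawlers are \emph{efficient} and both meet the \emph{endowment lower bound}. For the ``right'' class it derives a contradiction from a counting argument on the interval between $\omega_i$ and a putative blocking agent $\omega_j$; the ``left'' class is symmetric; the ``fixed'' class is handled by a separate direct argument. No reduction to a smaller problem appears anywhere.

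Your outline, by contrast, rests on two lemmas that you state but do not prove: the Key Lemma and, more seriously, the $DCR$-commutation $DCR(P,\omega)\!\upharpoonright_{N\setminus\{i^*\}}=DCR(P',\omega')$. The Key Lemma sketch already has a soft spot: you assert that an agent selected by $DCR$ whose \emph{original} endowment lies strictly right of $\omega_{i^*}$ must receive an object weakly right of $\omega_{i^*}$, but in $DCR$ such an agent's endowment can have drifted leftward past $\omega_{i^*}$ before he is selected (left-shifts are triggered whenever a right-walker to his left is served), so the inequality ``current endowment $\succeq \omega_{i^*}$'' needs its own inductive bookkeeping.

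The commutation step is the real gap. The reduced problem $(P',\omega')$ is the \emph{ACR}-specific reduction: remove $(i^*,o^*)$ and shift the block $\{s,\dots,k^*-1\}$ one spot to the right. There is no a priori reason $DCR$ should respect this particular reshuffling, because in the $DCR$ execution on $(P,\omega)$ agent $i^*$ is typically \emph{not} removed at step~1; several right-side agents are served first, each triggering left-shifts, and only later is $i^*$ extracted. Your assertion that deleting the $i^*$-step from the $DCR$ transcript ``coincides step-by-step'' with $DCR(P',\omega')$ therefore compares two runs whose early steps take place on genuinely different configurations (one containing $i^*$, one not, and with different ownerships on the block $\{s,\dots,k^*\}$). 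Making this precise requires exactly the kind of interval-counting that the paper's proof carries out directly---so the commutation lemma is not a shortcut but a restatement of comparable difficulty. As written, the induction does not close; you would need to supply a full proof of the commutation (or replace it with the paper's efficiency/lower-bound counting argument) before the scheme goes through.
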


Because the ascending crawler visits agents from left to right, one might have conjectured that this specific order confers a particular advantage to some agents based on the location of their endowments relative to the location of other agents' endowments.  Specifically, because agents whose endowments are on the left side of the order are visited first, one may suspect that these agents enjoy some advantage at the expense of the agents whose endowments are on the right side of the order.  However, Theorem \ref{symmetry} implies that the order in which agents are visited gives no benefit to any particular agent.     

Hereafter, for simplicity, we refer to these rules as the \textbf{crawler}, using the notation $\bm{CR}$.  Without loss of generality, for each endowment profile, when we operate the crawler, we apply the ascending procedure.

\subsection{The second equivalence result: The crawler is invariant to the order over the object set that preserves single-peakedness} 
There may be multiple orders on the object set with respect to which a preference profile is single-peaked. 
If a preference profile is single-peaked with respect to $\prec$, it is obviously single-peaked with respect to the reverse order.  Then Theorem \ref{symmetry} implies the following result:

\begin{cor}\label{cor_inv}
Let $\prec,\prec' \in \mathcal{L}$ be such that $\prec$ and $\prec'$ are reverse to each other. For each $(P,\omega) \in \mathcal{P}^N \times \mathcal{X}$ such that $P$ is single-peaked with respect to $\prec$ and $\prec'$, 
$$CR^\prec(P,\omega) = CR^{\prec'}(P,\omega).$$
\end{cor}

A preference profile may be single-peaked with respect to more than one order and its reverse, however. Here is an example of such a preference profile. 

\begin{example}\label{object_order}
\textit{A preference profile that is single-peaked with respect to more than one order and its reverse.} Let $N = \{1,2,3,4,5,6\}$.  Let $P \in \mathcal{P}^N$ be defined by 
\begin{align*}
P_{1,2,3}&:~\omega_1,~\omega_2,~\omega_3,~\omega_4,~\omega_5,~\omega_6 \\
P_{4,5,6}&:~\omega_2,~\omega_1,~\omega_5,~\omega_3,~\omega_6,~\omega_4.  
\end{align*}
There are four orders over the object set with respect to which the profile is single-peaked;
\begin{align*}
\begin{array}{lcl}
\omega_6 \prec \omega_5 \prec \omega_1 \prec \omega_2 \prec \omega_3 \prec \omega_4 
&
\text{and its reverse}
&
\omega_4\prec' \omega_3 \prec' \omega_2 \prec' \omega_1 \prec' \omega_5\prec' \omega_6 \\
\omega_6~\hat{\prec}~\omega_5~\hat{\prec}~\omega_2~\hat{\prec}~\omega_1~\hat{\prec}~\omega_3~\hat{\prec}\omega_4
&
\text{and its reverse}
&
\omega_4~\tilde{\prec}~\omega_3~\tilde{\prec}~\omega_1~\tilde{\prec}~\omega_2~\tilde{\prec}~\omega_5~\tilde{\prec}~\omega_6. 
\end{array}
\end{align*} 
\end{example} 

$ $ 

One may think that the crawler selects different allocations for the various orders with respect to which a preference profile is single-peaked. 
It turns out that the crawler is invariant with respect to the choice of such an order.
For each $\prec \in \mathcal{L}$, we denote by $CR^\prec$ the crawler with respect to order $\prec$. 
 
\begin{theorem}\label{order_inv}
The crawler is invariant with respect to the choice of an order on the object set that preserves single-peakedness of preference profiles.  That is, for each $(P,\omega) \in \mathcal{P}^N \times \mathcal{X}$ and each pair $\prec,\prec' \in \mathcal{L}$ such that $P$ is single-peaked with respect to $\prec$ and $\prec'$, 
\[
CR^\prec(P,\omega)~=~CR^{\prec'}(P,\omega).
\]  
\end{theorem}

Theorem \ref{order_inv} states that the crawler is robust not only to the choice of of an order on the object set and its reverse with respect to which single-peakedness holds but also to the choice of any order with respect to which a given preference profile is single-peaked. 
This result provides an additional fact that the crawler does not favor particular agents on the basis of the location of their endowments at the expense of the others.
\section{The third equivalence result: Equivalence between the crawler from random endowments and the random priority rule}\label{prob}

Instead of each agent being endowed with one object, agents may collectively own a set of objects and a rule then has to assign each agent one object.  We refer to this type of problems as ``object allocation problems".  
Although the crawler is defined for reallocation problems, it can help provide solutions to object allocation problems.  To explain how, we first define the concept of a ``sequential priority rule": To each order on the agent set is associated such a rule.  The first agent in the order receives his most preferred object, the second agent in the order receives his most preferred object among the remaining ones, and so on (the formal definition is given below).  None of these rules treats agents fairly.  However, their unfairness can be mitigated by considering lotteries.  Formally, a \textbf{lottery} is a probability distribution over allocations, $p~=~(p_1,\cdots,p_{n!})$, such that for each $k$, $p_k \geq 0$, and $\sum_k p_k = 1$.  
We denote the degenerate lottery that assigns probability 1 to allocation $x$ by $p^x$.  Let $\Delta(\mathcal{X})$ be the set of all lotteries. 
A \textbf{probabilistic rule} is a single-valued mapping which associates a lottery with each preference profile.  The average of all sequential priority rules is obtained by choosing the order on the agent set at random according to a uniform distribution and applying the induced sequential priority rule.  The rule so defined is called the ``random priority rule" \citep{as98}. 

We study the relationship between the random priority rule and the \textbf{crawler from random endowments}: For each problem, choose an endowment profile at random according to a uniform distribution, and apply the crawler to the induced object reallocation problem.  
The procedures underlying the crawler and the sequential priority rules are quite different.  Thus, one should be doubtful that there is any relationship between the random priority rule and the crawler from random endowments.  However, as we show, the probability distributions over allocations obtained by applying these two rules are the same (Theorem \ref{thm_random}). 

Let us now formally define the family of rules under discussion.  Here, a problem is simply defined as a preference profile.  However, the crawler selects an allocation on the basis of both a preference profile and an endowment profile, so that the family of rules that we are defining are parametrized by the endowment profile $\omega$.  We indicate this parametrization with the superscript $\omega$, and denote the rule associated with the parameter $\omega$ by $CR^\omega$.  

Given $f \in \mathcal{F}$, the \textbf{sequential priority rule induced by $\bm{f}$}, $SP^f:\mathcal{P}^N \rightarrow \mathcal{X}$, is defined by setting, for each $P \in \mathcal{P}^N$, 
\begin{align*}
SP^f_{f(1)}(P) &= X_{f(1)}(O), \\
SP^f_{f(2)}(P) &= X_{f(2)}\left(O ~\backslash~ \left\{SP^f_{f(1)}(P)\right\}\right), \\
 & \vdots \\
SP^f_{f(i)}(P) &= X_{f(i)}\left(O~\backslash~ \bigcup_{j = 1}^{i-1} \left\{SP^f_{f(j)}(P)\right\}\right), \\
 & \vdots \\
SP^f_{f(n)}(P) &= X_{f(n)} \left(O~\backslash~ \bigcup_{j = 1}^{n-1} \left\{SP^f_{f(j)}(P)\right\}\right). 
\end{align*}

The \textbf{random priority rule}, $RP: \mathcal{P}^N \rightarrow \Delta(\mathcal{X})$, is defined by setting, for each $P \in \mathcal{P}^N$, 
\[RP(P) = \sum_{f \in \mathcal{F}} \frac{1}{n!} p^{SP^f(P)}.\]
  
The \textbf{crawler from random endowments} $RCR: \mathcal{P}^N~\rightarrow~\Delta(\mathcal{X})$, is defined by setting, for each $P \in \mathcal{P}^N$,
\[RCR(P) = \sum_{\omega \in \mathcal{X}} \frac{1}{n!} p^{CR^\omega(P)}.\]

Here is our third equivalence result:

\begin{theorem} \label{thm_random}
The crawler from random endowments is the same probabilistic rule as the random priority rule. That is, $RCR = RP$.  
\end{theorem}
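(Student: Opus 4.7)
My plan is to prove Theorem~\ref{thm_random} by induction on the number of agents~$n$, with the trivial base case $n = 1$.  The starting observation is that, for each single-peaked profile $P$ and each endowment profile $\omega$, the ascending crawler produces a natural order of departure $\sigma(\omega) \in \mathcal{F}$; since each departing agent receives their most preferred object among those still available, one has $CR^\omega(P) = SP^{\sigma(\omega)}(P)$ as allocations.  In particular, the support of $RCR(P)$ is contained in that of $RP(P)$.

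The main task is to prove the count identity
\[
\bigl|\{\omega \in \mathcal{X} : CR^\omega(P) = x\}\bigr| \;=\; \bigl|\{f \in \mathcal{F} : SP^f(P) = x\}\bigr|
\]
for each allocation $x$.  I would decompose both sides by the first-stage outcome.  On the $RP$ side, conditioning on the first picker $f(1) = i$ (who receives $X_i(O) = p(P_i)$) gives a clean recursion in which, with probability $1/n$, agent $i$ is first and the residual distribution over the $(n-1)$-agent problem on $O \setminus \{p(P_i)\}$ is $RP(P_{-i})$.  On the $RCR$ side, the analogous decomposition conditions on the first departing agent $i^*(\omega)$, who receives $p(P_{i^*(\omega)})$, and on the post-shift endowment $\omega'$ on $N \setminus \{i^*(\omega)\}$; the conditional distribution is then $RCR(P_{-i^*(\omega)})$ on $O \setminus \{p(P_{i^*(\omega)})\}$.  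The induction hypothesis identifies the $(n-1)$-agent $RCR$ with $RP$, so the argument reduces to matching the joint law of (first leaver, post-shift endowment) against the uniform law over (first picker, residual orderings).

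The central combinatorial challenge is precisely this matching, and it is the step I expect to be hardest.  The marginal distribution of $i^*(\omega)$ over uniform $\omega$ is not, in general, uniform on $N$: for example, in the two-agent problem with $P_1\colon o_2, o_1$ and $P_2\colon o_1, o_2$, agent~$2$ departs first under both endowment profiles, yet the final allocation distributions still match because $SP^{(1,2)}(P) = SP^{(2,1)}(P)$.  Hence the match with $RP$'s uniform first-picker law requires a compensating dependence: whenever an agent $i$ is over-represented as a first leaver, the corresponding residual endowments must be under-represented in precisely the way that the inductive recursion demands.  I would attack this by analyzing the shift operation carefully, tracking how distinct choices of the original position of $\omega_{i^*}$ in the endowment ordering can lead to the same residual $\omega'$, and using Theorem~\ref{symmetry} (the equivalence of the ascending and descending crawlers) to count fibers from either end when direct enumeration in one direction is awkward.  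Once this fiber count matches the weight $1/n$ required by the $RP$ recursion (up to the induction hypothesis on the residual), the theorem follows.
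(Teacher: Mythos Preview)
Your opening observation is correct: if $\sigma(\omega)\in\mathcal{F}$ is the departure order in the ascending crawler, each departing agent receives her most preferred remaining object, so $CR^\omega(P)=SP^{\sigma(\omega)}(P)$; and indeed $\sigma$ is not injective. But the inductive decomposition you propose does not close, and the gap is exactly at the step you flag as hardest. Write $m(i,\omega')$ for the number of $\omega$ whose first leaver is $i$ and whose post-shift residual is $\omega'$. After applying the induction hypothesis, your recursion reduces the theorem to
\[
\sum_{i:\,x_i=p(P_i)}\ \sum_{\omega':\,CR^{\omega'}(P_{-i})=x_{-i}} m(i,\omega')
\;=\;
\sum_{i:\,x_i=p(P_i)}\bigl|\{\omega':CR^{\omega'}(P_{-i})=x_{-i}\}\bigr|.
\]
One can compute $m(i,\omega')$ explicitly---it is $0$ unless every agent placed by $\omega'$ to the left of $p(P_i)$ has peak strictly to the right of her position, and otherwise equals one plus the length of the maximal run of right-pointing agents immediately to the right of $p(P_i)$ in $\omega'$---and it depends on $\omega'$ itself, not only on the allocation $CR^{\omega'}(P_{-i})$. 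The induction hypothesis gives no control over how $m(i,\omega')$ correlates with the event $\{CR^{\omega'}(P_{-i})=x_{-i}\}$; in your own two-agent example the two sides agree only through a cancellation between the $i=1$ term (weight $0$) and the $i=2$ term (weight $2$), not term by term. That cross-$i$ cancellation is essentially the full content of the theorem, and neither ``analyzing the shift carefully'' nor invoking Theorem~\ref{symmetry} to count from the other end supplies a mechanism for it (the descending crawler yields a second fiber count $m'(i,\omega')$ with the same defect, and relating $m$, $m'$, and the unweighted count is no easier than the original problem).

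The paper avoids this difficulty by not using $\sigma$ at all. It constructs a \emph{different} map $g:\mathcal{X}\to\mathcal{F}$, still satisfying $CR^\omega(P)=SP^{g(\omega)}(P)$ (Claim~\ref{theorem_ equivalence}), but engineered to be a bijection: $g(\omega)$ is obtained by iteratively forming ``envy chains'' at the allocation $CR^\omega(P)$ (agent $i$ envies the holder of $i$'s most preferred remaining object) and merging connected chains into a single linear order, using the previous round's order to break ties. The substantive work becomes proving that $g$ is injective (Claims~\ref{shuffle} and~\ref{onetoone}), which is where single-peakedness is actually used; surjectivity then follows from $|\mathcal{X}|=|\mathcal{F}|=n!$ (Claim~\ref{onto}). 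In short, the paper replaces your fiber-count identity---whose proof you have not supplied---by an explicit bijection whose injectivity it establishes directly.
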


The proof involves constructing a mapping $g: \mathcal{X} \rightarrow \mathcal{F}$ by recursively finding pairs of agents such that for each pair, one ``envies" the other, and giving higher priority to the second agent than to the first agent. Then we show that (\rom{1}) the allocation selected by the crawler given an endowment profile is the same as the allocation selected by the sequential priority rule induced by the order on the agent set that is selected by applying the mapping to that endowment profile, and (\rom{2}) the mapping is one-to-one and onto.  

\subsection{Discussion}\label{dis}
Following the equivalence result between the core from random endowments and the random priority rule (\citealp{knuth96}; \citealp{as98}), several other equivalence results have been proved (\citealp{ps11}; \citealp{ls11}; \citealp{su05}; \citealp{ekici17}; \citealp{carroll14}; \citealp{bade19equiv}).  Each of these papers generalizes TTC in some fashion, and shows that the probabilistic rule associated with the generalized TTC is the same as the random priority rule (or a variant of it).  None of these rules are equivalent to the crawler.  Hence, our Theorem \ref{thm_random} cannot be deduced from any of these results.   

A large family of rules, which are \textit{efficient}, \textit{strategy-proof}, and \textit{non-bossy},\footnote{Formally, $\varphi$ is \textit{non-bossy} if for each $(P,\omega) \in \mathcal{P}^N \times \mathcal{X}$, each $i \in N$, and each $P'_i \in \mathcal{P}$, if $\varphi_i\left(\left(P'_i,P_{-i}\right),\omega\right)~=~\varphi_i\left(\left(P_i,P_{-i}\right),\omega\right)$, then $\varphi\left(\left(P'_i,P_{-i}\right),\omega\right) = \varphi\left(\left(P_i,P_{-i}\right),\omega\right)$.}  is defined by \cite{pu17}.
They show that on the domain of strict preferences, any rule satisfying the above properties is a member of this family. 
Rules in their family, the so-called ``trading cycles" rules, are parametrized by what they call a ``control-rights structure" (the formal definition is given in the appendix).  
The procedure underlying their definition is similar to TTC, but control-rights structures add flexibility to TTC.  Now, given a control-rights structure, the associated probabilistic rule is obtained by permuting the agent set at random according to a uniform distribution, and applying the induced rule.  

For each rule in the trading cycles family, the probabilistic rule associated with it is the same rule as the random priority rule \citep{bade19equiv}.  
This result holds for each preference profile, and in particular of course, it holds on the subdomain of problems with profiles of single-peaked preferences.  Thus, on this domain, her equivalence result remains true; yet, on this domain, there may be rules that are not trading cycles rules, but are still \textit{efficient}, \textit{strategy-proof}, and \textit{non-bossy}.  In fact, the crawler is an example if such a rule (Appendix~\ref{example_ap}). 
Therefore, our Theorem~\ref{thm_random} cannot be deduced from \cite{bade19equiv}'s previous result.    

\section{Conclusion}
We have shown three equivalence results pertaining to a new rule defined by \cite{bade19} for object reallocation problems when preferences are single-peaked.
Our first equivalence result states that this rule, which she calls the crawler and which we call the ascending crawler, is the same as the dual rule that she proposes, and which we call the descending crawler.  Thus, the order in which agents are visited does not confer any particular advantage to some agents based on the location of their endowments relative to the location of the other agents' endowments. 

Furthermore, single-peakedness of a preference profile may hold with respect to more than one order on the object set that differs from the reverse of another order. 
Our second equivalence result states that for each single-peaked preference profile, the crawler is invariant with respect to the choice of an order on the object set that preserves single-peakedness of the profile.  This result provides additional fact that the crawler does not favor particular agents on the basis of the location of their endowments at the expense of the others.   

Our third equivalence result concerns object allocation problems, and state that the probability distribution over allocations selected by the crawler from random endowments is the same as the probability distribution selected by the random priority rule.  This equivalence result provides another structural analysis of the crawler.

\begin{appendices}


\section{Proof of Theorem \ref{symmetry}}
We use two facts to prove Theorem \ref{symmetry}.
First, because both the ascending and the descending crawlers are \emph{individually rational}, 

\begin{fact}\label{claim1_theorem1}
For each $(P,\omega) \in \mathcal{P}^N \times \mathcal{X}$ and each $i \in N$, 
$$p(P_i) = \omega_i~\Longrightarrow~ACR_i(P,\omega) = \omega_i = DCR_i(P,\omega).$$
\end{fact}

$ $ 

Second for each problem and each pair of adjacent agents,  if each of them prefers the endowment of the other agent to his own,  the allocation selected by the ascending crawler remains the same even if they swap their endowments before the ascending crawler is applied \citep{tamura21}.  Also, an analogous argument holds for the descending crawler.\footnote{\cite{tamura21} calls the ascending crawler simply the crawler.}

Formally, for each $\omega \in \mathcal{X}$ and each pair $i,j \in N$, let $\omega^{i,j} \in \mathcal{X}$ be obtained from $\omega$ by swapping the endowments of agent $i$ and $j$. That is, $\omega^{i,j}_i = \omega_j$, $\omega^{i,j}_j = \omega_i$, and for each $k \in N \setminus \{i,j\}$, $\omega^{i,j}_k = \omega_k$. 

\begin{fact}[\citealp{tamura21}]\label{claim2_theorem1}
For each $(P,\omega) \in \mathcal{P}^N \times \mathcal{X}$ and each $i \in N$, 
$$\omega_{i+1}~P_i~\omega_i~\text{and}~\omega_i~P_{i+1}~\omega_{i+1}~\Longrightarrow~
~ACR(P,\omega^{i,i+1}) = ACR(P,\omega)~\text{and}~DCR(P,\omega^{i,i+1}) = DCR(P,\omega).$$
\end{fact}

$ $ 

\begin{proof}[Proof of Theorem \ref{symmetry}]
Let $(P,\omega) \in \mathcal{P}^N \times \mathcal{X}$.  The proof is by induction on $n$ (the number of agents). 

Suppose that $n = 1$.  Then $ACR(P,\omega) = \omega = DCR(P,\omega)$. 

Let $k \in \n$. Suppose that when $n < k$, 
$$ACR(P,\omega) = \omega = DCR(P,\omega).$$
We show that when $n = k$, 
\begin{equation}\label{eq:symmetry1}
ACR(P,\omega) = DCR(P,\omega).
\end{equation} 

Without loss of generality, suppose that agent 1 is the first agent who is assigned an object when the ascending crawler is applied to $(P,\omega)$. 
Suppose that $p(P_1) = \omega_1$. Then by Fact \ref{claim1_theorem1},
$$ACR_1(P,\omega) = DCR_1(P,\omega).$$
After agent 1 leaves with his assignment, there are $k-1$ remaining agents. By the induction hypothesis, \eqref{eq:symmetry1} holds. 
Suppose that $p(P_1) \neq \omega_1$. Note that for each $i \in N$ such that $\omega_i \prec \omega_1$, we have $\omega_i \prec p(P_i)$. We call agent $a$ the agent whose endowment is assigned to agent~1 when the ascending crawler is applied to $(P,\omega)$, i.e., $\omega_{a} = ACR_1(P,\omega)$. This means that $\omega_{a} = p(P_1)$.  Also, let $\{i_1,\hdots,i_k\} \subset N$ be the set of agents whose endowments are between $\omega_{a}$ and $\omega_1$ and $\omega_{i_k} \prec \hdots \prec \omega_{i_1}$. 
Starting with $\omega$, suppose that agents 1 and $i_1$ swap their endowments before either the ascending or the descending crawler is applied to $(P,\omega)$. By Fact \ref{claim2_theorem1},
$$ACR(P,\omega) = ACR(P,\omega^{1,i_1})~\text{and}~DCR(P,\omega) = DCR(P,\omega^{1,i_1}).$$
Let $\tilde{\omega} \equiv \omega^{1,i_1}$. 
Starting with $\tilde{\omega}$, suppose that agents 1 and $i_2$ swap their endowments before the ascending or the descending crawler is applied to $(P,\tilde{\omega})$. By Fact \ref{claim2_theorem1},
$$ACR(P,\omega) = ACR(P,\tilde{\omega}) = ACR(P,\tilde{\omega}^{1,i_2})~\text{and}~DCR(P,\omega) = DCR(P,\tilde{\omega}) = DCR(P,\tilde{\omega}^{1,i_2}).$$
Let $\hat{\omega} \in \mathcal{X}$ be such that $\hat{\omega}_1 = \omega_{a}$, $\hat{\omega}_{a} = \omega_{i_k}$, and for each $l \in \{1,\hdots,k\}$, $\hat{\omega}_{i_l} = \omega_{i_{l-1}}$ (where $\hat{\omega}_{i_1} = \omega_1$). 
By an analogous argument, 
$$ACR(P,\omega) = ACR(P,\hat{\omega})~\text{and}~DCR(P,\omega) = DCR(P,\hat{\omega}).$$
By Fact \ref{claim1_theorem1},
$$ACR_1(P,\omega) = ACR_1(P,\hat{\omega}) = \hat{\omega}_1 = DCR_1(P,\hat{\omega}) = DCR_1(P,\omega).$$
After agent 1 leaves with his assignment, there are $k-1$ remaining agents. By the induction hypothesis, \eqref{eq:symmetry1} holds. 
\end{proof}

\section{Proof of Theorem \ref{order_inv}}

Let $\prec,\prec' \in \mathcal{L}$ be such that $\prec'$ is not the reverse of $\prec$. Let $P \in \mathcal{P}^N$ be single-peaked with respect to both $\prec$ and $\prec'$. 
Let $\underline{o},\overline{o} \in O$ be such that there is pair $i,j \in N$ such that $p(P_i) = \underline{o}$ and $p(P_j) = \overline{o}$, and for each $i \in N$, $\underline{o} \precsim p(P_i) \precsim \overline{o}$. By Corollary \ref{cor_inv}, without loss of generality, suppose that $\underline{o} \precsim' \overline{o}$. 

\begin{claim}\label{order_inv_claim1}
For each pair $o,o' \in O$, 
$$\underline{o} \prec o \prec o' \prec \overline{o}~\iff~\underline{o} \prec' o \prec' o' \prec' \overline{o}.$$
\end{claim}

\begin{proof}[Proof of Claim \ref{order_inv_claim1}]
Let $o \in O$ be such that $\underline{o} \prec o \prec \overline{o}$. 
Let $i,j \in N$ be such that $p(P_i) = \underline{o}$ and $p(P_j) = \overline{o}$. 
Because $P_i$ and $P_j$ are single-peaked with respect to $\prec$, 
\begin{equation}\label{eq:theorem21}
\underline{o}~P_i~o~P_i~\overline{o}~\text{and}~\overline{o}~P_j~o~P_j~\underline{o}.
\end{equation}
Agent $i$'s preferences imply that one of the following conditions holds:
$$\text{(i)}~\underline{o} \prec' o \prec' \overline{o},~\text{(ii)}~\overline{o} \prec' o \prec' \underline{o},~\text{(iii)}~o \prec' \underline{o} \prec' \overline{o},~\text{or (iv)}~\overline{o} \prec' \underline{o} \prec' o.$$
Conditions (ii) and (iv) contradict the assumption that $\underline{o} \precsim' \overline{o}$. Condition (iii) contradicts the hypothesis that $P_j$ is single-peaked with respect to $\prec'$. Therefore, Condition (i) holds. 

Now let $o,o' \in O$ be such that $\underline{o} \prec o \prec o' \prec \overline{o}$. 
By the above argument, $\underline{o} \prec' o,o' \prec' \overline{o}$. 
Note that $\underline{o}~P_i~o~P_i~o'~P_i~\overline{o}$. Because $P_i$ is single-peaked with respect to $\prec'$, $o \prec' o'$. 
\end{proof}

$ $ 

\begin{proof}[Proof of Theorem \ref{order_inv}]
Let $\prec,\prec' \in \mathcal{L}$ be such that $\prec \neq \prec'$. 
Let $(P,\omega) \in \mathcal{P}^N \times \mathcal{X}$ be such that $P$ is single-peaked with respect to both $\prec$ and $\prec'$.  The proof is by induction on $n$ (the number of agents). 

Suppose that $n =1$. Then $CR^\prec(P,\omega) = CR^{\prec'}(P,\omega)$. 
Let $k \in \n$. Suppose that when $n < k$, we have $CR^\prec(P,\omega) = CR^{\prec'}(P,\omega)$. 

We show that when $n = k$, 
\begin{equation}\label{eq:inv1}
CR^\prec(P,\omega) = CR^{\prec'}(P,\omega).
\end{equation} 
Suppose that for each $o \in O$, $\underline{o} \precsim o \precsim \overline{o}$. This implies $\prec'$ is the reverse of $\prec$. By Theorem \ref{symmetry}, \eqref{eq:inv1} holds. 
Suppose that there is $o \in O$ such that either $o \prec \underline{o}$ or $\overline{o} \prec o$.  By Claim \ref{order_inv_claim1}, for each pair $o,o' \in O$, $\underline{o} \prec o \prec o' \prec \overline{o}$ if and only if $\underline{o} \prec' o \prec' o' \prec' \overline{o}$. 
Without loss of generality, suppose that agent 1 is the first agent who is assigned an object when $CR^\prec$ is applied to $(P,\omega)$. 
Suppose that $p(P_1) = \omega_1$. By Fact \ref{claim1_theorem1}, 
$$CR^\prec_1(P,\omega) = \omega_1 = CR^{\prec'}_1(P,\omega).$$
After agent 1 leaves with his assignment, there are $k-1$ remaining agents. By the induction hypothesis, \eqref{eq:inv1} holds.
Suppose that $p(P_1) \neq \omega_1$. 
Note that for the agents whose endowments are between $\underline{o}$ and $\overline{o}$, agent 1 is the first agent who is assigned an object when $CR^{\prec'}$ is applied to $(P,\omega)$. 
Hence, for each $i \in N$ such that $\underline{o} \precsim' \omega_i \prec \omega_1$, we have $\omega_i \prec' p(P_i)$. 
We call agent $a \in N$ the agent whose endowment is assigned to agent 1 when $CR^{\prec'}$ is applied to $(P,\omega)$, i.e., $\omega_{a} = CR^{\prec'}_1(P,\omega)$. Note that $\omega_{a} = p(P_1)$.  Also, let $\{i_1,\hdots,i_k\} \subset N$ be the set of agents whose endowments are between $\omega_{a}$ and $\omega_1$ and $\omega_{i_k} \prec \hdots \prec \omega_{i_1}$. Hence, 
$$\omega_{a} \prec' \omega_{i_k} \prec' \hdots \prec' \omega_{i_1} \prec' \omega_1.$$
Let $\hat{\omega} \in \mathcal{X}$ be such that $\hat{\omega}_1 = \omega_{a}$, $\hat{\omega}_{a} = \omega_{i_k}$, and for each $l \in \{1,\hdots,k\}$, $\hat{\omega}_{i_l} = \omega_{i_{l-1}}$ (where $\hat{\omega}_{i_1} = \omega_1$). 
By applying Fact \ref{claim2_theorem1},
$$CR^{\prec'}(P,\hat{\omega}) = CR^{\prec'}(P,\omega).$$ 
By Fact \ref{claim1_theorem1}, 
$$CR^{\prec'}_1(P,\omega) = p(P_1) = CR^\prec_1(P,\omega).$$
After agent 1 leaves with his endowment, there are $k-1$ remaining agents. By the induction hypothesis, \eqref{eq:inv1} holds. 
\end{proof}

\section{Proof of Theorem \ref{thm_random}}

Let $P \in \mathcal{P}^N$. 
The following lemma shows that for each allocation selected by the sequential priority rule induced by a given order on the agent set, there is an endowment profile for which the crawler selects the same allocation at the endowment profile.  Conversely, for each allocation selected by the crawler at a given endowment profile, there is an order on the agent set such that the sequential priority rule induced by the order selects the same allocation.    

\begin{lemma} \label{set-equality}
\[\left\{x \in \mathcal{X}:~\text{there is}~f \in \mathcal{F}~\text{such that}~SP^f(P) = x\right\} = \left\{x \in \mathcal{X}:~\text{there is}~\omega \in \mathcal{X}~\text{such that}~CR^\omega(P) = x\right\}.\]
\end{lemma}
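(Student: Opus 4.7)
The plan is to prove the two set inclusions separately, leveraging two facts: the crawler selects, at each step, the current agent's most preferred object in the set of remaining objects (this is immediate from the definition $ACR_{i_{k^*}}(P,\omega) = X_{i_{k^*}}(O^{k-1})$), and the crawler satisfies the endowment lower bound (Bade, 2019).

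For the inclusion $(\supseteq)$, I would fix $\omega \in \mathcal{X}$ and define $f \in \mathcal{F}$ by recording the agents in the order the crawler removes them; that is, $f(k) = i_{k^*}$ at step $k$ of the crawler applied to $(P,\omega)$. A straightforward induction on $k$ shows that the set of remaining objects after step $k$ is the same for both procedures, and that at each step both assign agent $f(k)$ his most preferred object in that common remaining set. Hence $SP^f(P) = CR^\omega(P)$, giving the inclusion.

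For the inclusion $(\subseteq)$, I would fix $f \in \mathcal{F}$ and set $\omega := SP^f(P) \in \mathcal{X}$. Two facts do all the work. First, sequential priority rules are Pareto efficient: a direct induction on $f(1),f(2),\ldots$ shows that any allocation weakly preferred by every agent must agree with $SP^f(P)$ coordinate by coordinate, since $f(1)$ already receives his top choice, $f(2)$ his top choice among the remaining, and so on. Second, by the endowment lower bound, $CR^\omega_i(P) \, R_i \, \omega_i = SP^f_i(P)$ for every $i \in N$. If any of these weak preferences were strict, $CR^\omega(P)$ would Pareto-dominate $\omega = SP^f(P)$, contradicting the efficiency of $SP^f(P)$. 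Since preferences are strict, weak preference with no strict improvement forces coordinate-wise equality, so $CR^\omega(P) = SP^f(P)$.

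The argument is short once one spots the right endowment profile in direction $(\subseteq)$. The main conceptual obstacle, if there is one, is resisting the temptation to build a delicate endowment profile by reverse-engineering the crawler's processing order to match $f$; the observation that $\omega = SP^f(P)$ itself suffices, forced by efficiency against the endowment lower bound, is the crux. Note that this proof establishes only the set equality needed for Lemma~\ref{set-equality}; proving Theorem~\ref{thm_random} further requires a bijection between $\mathcal{X}$ and $\mathcal{F}$ compatible with this correspondence, which is a strictly stronger counting statement.
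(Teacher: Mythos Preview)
Your proof is correct, but it takes a different route from the paper's.

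The paper proves the lemma by detouring through TTC: it invokes the Abdulkadiro\u{g}lu--S\"{o}nmez result that the set of sequential-priority outcomes equals the set of TTC outcomes, and then shows that the TTC outcome set equals the crawler outcome set. For the latter, it uses exactly the efficiency-plus-endowment-lower-bound trick you use in your $(\subseteq)$ direction, but with TTC playing the role of the efficient rule: given $\omega$, set $\omega' = TTC^\omega(P)$; then $CR^{\omega'}(P)$ weakly Pareto-dominates $\omega'$ by the endowment lower bound, and efficiency of TTC forces equality. The reverse inclusion is declared analogous (swap the roles, using efficiency of the crawler and the endowment lower bound for TTC).

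Your argument is more self-contained. Your $(\subseteq)$ direction applies the same dominance trick directly to $SP^f$, bypassing TTC. Your $(\supseteq)$ direction is genuinely different and more elementary: you observe from the formal definition $ACR_{i_{k^*}}(P,\omega) = X_{i_{k^*}}(O^{k-1})$ that each removed agent receives his most preferred \emph{remaining} object, so the crawler's removal order already \emph{is} a priority order $f$ with $SP^f(P) = CR^\omega(P)$. This avoids the Abdulkadiro\u{g}lu--S\"{o}nmez lemma entirely. The paper's route has the advantage of making the relationship to TTC explicit (which is thematically relevant to Corollary~\ref{thm_random1}); your route has the advantage of being shorter and not relying on external results. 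Your closing remark correctly flags that the bijection needed for Theorem~\ref{thm_random} is strictly more than what this lemma delivers.
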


Given a preference profile, the set of allocations selected by TTC at the various endowment profiles is the same as the set of allocations selected by the sequential priority rules induced by the various orders on the agent set (\citealp{knuth96}; \citealp{as98}).\footnote{\cite{as98} also show that for each allocation in the set, the frequency of the selection of that allocation by TTC is the same as that by the sequential priority rules.}  We show that the set of allocations selected by the crawler at the various endowment profiles is the same as the set of allocations selected by TTC at the various endowment profiles.  Combining these results, we derive Lemma \ref{set-equality}.  

\begin{lemma*}[\citealp{knuth96}; \citealp{as98}]
\begin{equation}\label{res_as98}
\left\{x \in \mathcal{X}:~\text{there is}~f \in \mathcal{F}~\text{such that}~SP^f(P) = x\right\} = \left\{x \in \mathcal{X}:~\text{there is}~\omega \in \mathcal{X}~\text{such that}~TTC^\omega(P) = x\right\}.
\end{equation}
\end{lemma*}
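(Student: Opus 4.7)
The plan is to prove the two set inclusions separately by giving an explicit correspondence between priority orders and endowment profiles.

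For the inclusion $\supseteq$, I will take $x = TTC^{\omega}(P)$ and construct an order $f \in \mathcal{F}$ for which $SP^{f}(P) = x$. The idea is to list the agents in the order they leave during the execution of $TTC^{\omega}$: all agents who leave in round $1$ first, those in round $2$ next, and so on. Within a single round, agents may belong to several distinct cycles; between such cycles any order will do. Within one cycle $i_{1} \to i_{2} \to \cdots \to i_{m} \to i_{1}$ (meaning that $i_{j}$'s top remaining choice in that round is $\omega_{i_{j+1}}$, which is the object $i_{j}$ obtains under TTC), I will list the agents cyclically as $i_{1}, i_{2}, \ldots, i_{m}$. I will then verify by induction on $k$ that $SP^{f}_{f(k)}(P) = x_{f(k)}$. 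When it is the turn of $f(k) = i_{j}$ under $SP^{f}$, the objects already committed are exactly $\{x_{f(1)}, \ldots, x_{f(k-1)}\}$, so the available set is a subset of the set of objects still present at the start of $i_{j}$'s TTC round. The object $\omega_{i_{j+1}} = x_{i_{j}}$ is still available (its SP recipient is $i_{j}$ himself), and by construction it is $i_{j}$'s top choice among the TTC-remaining objects, hence also among the smaller SP-available set. This forces $SP^{f}_{i_{j}}(P) = x_{i_{j}}$.

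For the inclusion $\subseteq$, I will take $x = SP^{f}(P)$ and show that the endowment $\omega := x$ satisfies $TTC^{\omega}(P) = x$. Because $TTC$ produces the unique core element of each reallocation problem, it suffices to establish that $x$ belongs to the core of $(P, x)$. Individual rationality is trivial since each agent is endowed with his assignment. To exclude blocking coalitions, suppose $C \subseteq N$ could reassign $\{x_{i} : i \in C\}$ among its members so that every member weakly improves and at least one strictly improves. Let $i^{*}$ be the member of $C$ appearing earliest in $f$. Then every object $x_{j}$ with $j \in C$ lies in the SP-available set at $i^{*}$'s turn, where by definition $x_{i^{*}}$ is $i^{*}$'s top choice; by strictness of preferences, $i^{*}$ is therefore assigned $x_{i^{*}}$ in the blocking. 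Removing $i^{*}$ from $C$ and repeating the argument for the next-earliest member shows inductively that every member of $C$ keeps his original assignment, contradicting the existence of a strictly improving member. Hence $x$ is in the core and $TTC^{x}(P) = x$.

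The step I expect to be delicate is the verification in the $\supseteq$ direction: I must make sure that when several cycles coexist within the same TTC round, processing one cycle under $SP^{f}$ never removes from the market an object that another not-yet-processed agent of the same round would list as his top choice among the current SP-available objects. The cyclic ordering inside each cycle guarantees that the only object $i_{j}$ needs from his own cycle is $\omega_{i_{j+1}}$, which is reserved for him; and objects from parallel cycles in the same round were not his top choice in the TTC-remaining set to begin with, so their disappearance cannot promote any of them above $x_{i_{j}}$ in his preferences. Once this is written out carefully, both inclusions combine to give the claimed set equality.
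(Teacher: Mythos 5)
Your argument is correct, but be aware that the paper itself contains no proof of this lemma: it is stated with attribution to \cite{as98} and used as an imported black box, so there is no in-paper argument to compare against. Relative to the cited source, your route differs in emphasis: \cite{knuth96} and \cite{as98} prove the stronger, measure-preserving statement $RP=RTTC$ by constructing a bijection between orders on $N$ and endowment profiles out of the TTC cycle structure, and the set equality \eqref{res_as98} falls out as a by-product; you prove only the set equality, which is all the lemma asserts, via two direct inclusions. Both of your inclusions are sound. For $\supseteq$, the delicate point you flag is in fact harmless, and slightly more is true: only the ordering of TTC \emph{rounds} matters, since at the turn of any agent who leaves TTC in round $r$ the serial-priority available set is contained in the set of objects remaining at the start of round $r$ and still contains his assigned object, so any ordering of agents within a round (cyclic or not) works. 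For $\subseteq$, the argument is the standard ``peel off the earliest coalition member in $f$'' proof that $SP^f(P)$ is unblocked at the endowment profile equal to itself; the only step worth writing out explicitly is that after concluding $y_{i^*}=x_{i^*}$, the residual coalition redistributes exactly $\{x_j:\ j\in C\setminus\{i^*\}\}$, all of whose elements are still available at the turn of the next-earliest member, which is precisely what your induction requires. Combined with the Roth--Postlewaite fact (already invoked in the paper) that TTC selects the unique core allocation, this closes the proof.
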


\begin{proof}[Proof of Lemma \ref{set-equality}]
We show that for each $\omega \in \mathcal{X}$, there is $\omega' \in \mathcal{X}$ such that \[TTC^\omega(P) = CR^{\omega'}(P).\]  
This implies that 
\[\left\{x \in \mathcal{X}:~\text{there is}~\omega \in \mathcal{X}~\text{such that}~TTC^\omega(P) = x\right\} \subseteq \left\{x \in \mathcal{X}:~\text{there is}~\omega \in \mathcal{X}~\text{such that}~CR^\omega(P) = x\right\}.\]
The proof for the other direction is analogous.  Hence, we only show one direction. 
 
Let $\omega \in \mathcal{X}$.   
Because the crawler is \textit{individually rational}, for each $i \in N$, 
\[CR_i^{TTC^\omega(P)}(P)~R_i~TTC_i^\omega(P).\] 
Because TTC is \textit{efficient}, there is no $i \in N$ such that 
\[CR_i^{TTC^\omega(P)}(P)~P_i~TTC_i^\omega(P).\]  
This implies that 
\[CR^{TTC^\omega(P)}(P) = TTC^\omega(P).\]

Together with \eqref{res_as98}, we have 
\[\left\{x \in \mathcal{X}:~\text{there is}~f \in \mathcal{F}~\text{such that}~SP^f(P) = x\right\} = \left\{x \in \mathcal{X}:~\text{there is}~\omega \in \mathcal{X}~\text{such that}~CR^\omega(P) = x\right\}.\]
\end{proof}

\bigskip

We now construct a mapping $g$ from $\mathcal{X}$ into $\mathcal{F}$ and show that (\rom{1}) the crawler induced by an endowment profile and the sequential priority rule induced by the order on the agent set given by the mapping $g$ select the same allocation, and (\rom{2}) the mapping is one-to-one and onto. 

Mapping $g$ is constructed by the procedure described below.  There are two phases.  In Phase~1, we construct ``envy graphs", and in Phase 2, we derive an order on the agent set. 
 
Phase 1 proceeds as follows:   
At Round 1, 
(1) for each agent $i \in N$, we identify $j~\in~N$ who receives agent $i$'s most preferred object.  If $j \neq i$, we say that \textbf{agent $\bm{i}$ envies agent~$\bm{j}$}. 
(2)~For each agent, if he is envied by someone and he envies someone else, we connect these envy relations.  For each sequence of agents such that each agent in the sequence, except for the first agent, envies the agent on his immediate left and no agent envies the last agent in the sequence, we connect all of these envy relations.  Because the crawler is \textit{efficient}, the first and the last agents cannot be the same, i.e., there is no cycle of envy.  We call the resulting connected envy relations a \textbf{maximal envy chain}.    
(3) For each pair of maximal envy chains formed at Round 1, if there is an agent who belongs to both, we ``attach" them at this agent.  The result is a subgraph of the envy graph. 
(4) We remove any agent who does not envy anyone from the preference profile.  Also, we update the preferences of the remaining agents as follows: for each agent who envies someone, we restrict his preferences to the objects that are not his most preferred object.  
At each Round $r \geq 2$, we repeat this procedure among the remaining agents.  Moreover, at Step 3, in addition to each pair of maximal envy chains formed at Round $r$, for each pair of maximal envy chains such that one forms at that round and the other forms at an earlier round, if there is an agent who belongs to both chains, we attach these chains at this agent. 
Phase 1 terminates when all agents are removed. The resulting graph is the envy graph. 

Phase 2 proceeds as follows:  
(1) initially, the agents are ordered in such a way that for each pair $i,j \in N$, agent $i$ comes earlier than agent $j$ if and only if $\omega_i \prec \omega_j$.  
(2) For the agents who form a component of the envy graph, we permute their positions in the order based on the following two criteria. (i) Consider a set of agents who form a maximal envy chain. For each pair of agents in the chain, if one envies the other, the second agent comes earlier than the first agent in the updated order. (ii) For each pair of agents who do not belong to the same maximal envy chain, their relative positions in the order are determined according to the positions of their endowments and their assignments as described below. 

\bigskip 

Here is the formal definition.  
For each agent, we define two sets of objects, a \textbf{set of eliminated objects} and a \textbf{set of residual objects}, that are updated at each round.
Recall that for each $P_i \in \mathcal{P}$ and each $O' \subseteq O$, we denote by $X_i(O')$ the most preferred object of agent $i$ in $O'$ at $P_i$.

Let $\omega \in \mathcal{X}$.  

\noindent\textbf{Phase 1}: \textbf{Construct the envy graph.}  

\noindent\textbf{Round 0}:
For each $i \in N$, let 
\[
O^0_i = O~\text{and}~E^0_i = \emptyset. 
\]

\noindent\textbf{Round $\bm{r \geq 1}$}:

\noindent\textbf{Step 1}: \textbf{Identify each pair $\bm{i,j \in N}$ such that agent $\bm{i}$ envies agent $\bm{j}$ for $\bm{X(O_i^{r-1})}$.}

For each $i \in N$ such that $O^{r-1}_i \neq \emptyset$, identify $j \in N$ such that $CR^\omega_j(P)~=~X_i(O^{r-1}_i)$.  If $j \neq i$, write $i \rightarrow j$.  Also, let $E^r_i~=~\{X_i(O^{r-1}_i)\}$.  If $j = i$, let $E^r_i = O^{r-1}_i$. 
Notice that each agent envies at most one agent at each round. 

\noindent\textbf{Step 2}: \textbf{Identify all maximal envy chains.}\footnote{For each sequence of agents $\{i_1,i_2,\hdots,i_k\} \subseteq N$ for some $k \in \{1,\hdots,n\}$ such that for each $k'~\in~\{2,\hdots,k\}$, $i_{k'} \rightarrow i_{k'-1}$, agent $i_1$ does not point to any agent, and no agent points to $i_k$, we concatenate these relations, i.e., 
\[
i_k~\rightarrow~i_{k-1}~\rightarrow \hdots \rightarrow i_2~\rightarrow~i_1. 
\]
}

\noindent\textbf{Step 3}: \textbf{Construct subgraphs of the envy graph.} 


For each pair of maximal envy chains such that one forms at Round $r$ and the other forms at Round $r' \leq r$, if there is an agent who belongs to both, attach them at this agent. 
There are three possible configurations described below.  Each component in a subgraph of the envy graph is either a maximal envy chain or a combination of these three configurations. 

Let $I,J \subseteq N$ be such that the agents in $I$ form a maximal envy chain at Round $r'$ and the agents in $J$ form a maximal envy chain at Round $r$.  Let $k,k'~\in~\{1,\hdots,n-1\}$.

\noindent\textbf{Configuration 1}:
There is $\{i_0,\hdots,i_{k}\} \subseteq I$ and $\{j_0,\hdots,j_{k'}\} \subseteq J$, where $i_{0}~=~ j_{0}~\equiv~l$, such~that 
\begin{align*}
i_k~\rightarrow \hdots \rightarrow~i_1~\rightarrow~l~\text{and}~
j_{k'}~\rightarrow \hdots \rightarrow~j_1~\rightarrow~l. 
\end{align*}
Connect them as described in Figure \ref{Configuration 1}.

\noindent\textbf{Configuration 2}: 
There is $\{i_1,\hdots,i_{k+1}\} \subseteq I$ and $\{j_1,\hdots,j_{k'+1}\} \subseteq J$, where $i_{k+1}~=~ j_{k'+1}~\equiv~l$, such~that 
\begin{align*}
l~\rightarrow~i_k~\rightarrow \hdots \rightarrow~i_1~\text{and}~
l~\rightarrow~j_{k'}~\rightarrow \hdots \rightarrow~j_{1}. 
\end{align*}
Connect them as described in Figure \ref{Configuration 2}.

\noindent\textbf{Configuration 3}: 
There is $\{i_1,\hdots,i_{k}\} \subseteq I$ and $\{j_1,j_2\} \subseteq J$, where $i_1 = j_1$ and $i_k = j_2$, such~that
\begin{align*}
i_k~\rightarrow \hdots \rightarrow~i_1~\text{and}~j_2~\rightarrow~j_1.
\end{align*}
Connect them as described in Figure \ref{Configuration 3}. 

\begin{figure}[H]
\centering
\begin{subfigure}[b]{0.3\textwidth}
\centering
\tikzset{every picture/.style={line width=0.75pt}} 

\begin{tikzpicture}[x=0.75pt,y=0.75pt,yscale=-1,xscale=1]

\draw    (205,128) -- (225,138) ;
\draw [shift={(225,138)}, rotate = 207] [color={rgb, 255:red, 0; green, 0; blue, 0 }  ][line width=0.75]    (6.56,-1.97) .. controls (4.17,-0.84) and (1.99,-0.18) .. (0,0) .. controls (1.99,0.18) and (4.17,0.84) .. (6.56,1.97)   ;
\draw    (207,168) -- (225,160) ;
\draw [shift={(225,160)}, rotate = 515] [color={rgb, 255:red, 0; green, 0; blue, 0 }  ][line width=0.75]    (6.56,-1.97) .. controls (4.17,-0.84) and (1.99,-0.18) .. (0,0) .. controls (1.99,0.18) and (4.17,0.84) .. (6.56,1.97)   ;

\draw (120,120) node [anchor=north west][inner sep=0.75pt]  [font=\footnotesize]  {$i_{k} \rightarrow \cdots \rightarrow i_{1}$};
\draw (118,162) node [anchor=north west][inner sep=0.75pt]  [font=\footnotesize]  {$j_{k'} \rightarrow \cdots \rightarrow j_{1}$};
\draw (230,141) node [anchor=north west][inner sep=0.75pt]  [font=\footnotesize]  {$l$};

\end{tikzpicture}
\caption{Configuration 1}
\label{Configuration 1}
\end{subfigure}
\begin{subfigure}[b]{0.3\textwidth}
\centering
\tikzset{every picture/.style={line width=0.75pt}} 

\begin{tikzpicture}[x=0.75pt,y=0.75pt,yscale=-1,xscale=1]

\draw    (96.66,157.97) -- (112.53,169.31) ;
\draw [shift={(114.16,170.47)}, rotate = 215.54] [color={rgb, 255:red, 0; green, 0; blue, 0 }  ][line width=0.75]    (6.56,-1.97) .. controls (4.17,-0.84) and (1.99,-0.18) .. (0,0) .. controls (1.99,0.18) and (4.17,0.84) .. (6.56,1.97)   ;
\draw    (97.66,144.97) -- (114.97,134.04) ;
\draw [shift={(116.66,132.97)}, rotate = 507.72] [color={rgb, 255:red, 0; green, 0; blue, 0 }  ][line width=0.75]    (6.56,-1.97) .. controls (4.17,-0.84) and (1.99,-0.18) .. (0,0) .. controls (1.99,0.18) and (4.17,0.84) .. (6.56,1.97)   ;

\draw (122,123) node [anchor=north west][inner sep=0.75pt]  [font=\footnotesize]  {$i_{k} \ \rightarrow \ \cdots \ \rightarrow \ i_{1}$};
\draw (122,165) node [anchor=north west][inner sep=0.75pt]  [font=\footnotesize]  {$j_{k'} \ \rightarrow \ \cdots \ \rightarrow \ j_{1}$};
\draw (85,144) node [anchor=north west][inner sep=0.75pt]  [font=\footnotesize]  {$l$};

\end{tikzpicture}
\caption{Configuration 2}
\label{Configuration 2}
\end{subfigure}
\begin{subfigure}[b]{0.3\textwidth}
\centering
\tikzset{every picture/.style={line width=0.75pt}} 

\begin{tikzpicture}[x=0.75pt,y=0.75pt,yscale=-1,xscale=1]

\draw    (170,173) .. controls (185,154) and (235,154) .. (245,173) ;
\draw [shift={(247,175)}, rotate = 235] [color={rgb, 255:red, 0; green, 0; blue, 0 }  ][line width=0.75]    (6.56,-1.97) .. controls (4.17,-0.84) and (1.99,-0.18) .. (0,0) .. controls (1.99,0.18) and (4.17,0.84) .. (6.56,1.97)   ;

\draw (164,180.9) node [anchor=north west][inner sep=0.75pt]  [font=\footnotesize]  {$i_{k} \ \rightarrow \ \cdots \ \rightarrow \ i_{1}$};

\end{tikzpicture}
\caption{Configuration 3}
\label{Configuration 3}
\end{subfigure}
\caption{\small{Possible configurations of a component in a subgraph of the envy graph.}}
\end{figure}
Because the crawler is \textit{efficient}, no cycle forms.

\begin{figure}[H]
\centering
\begin{subfigure}{0.49\textwidth}
\centering
\tikzset{every picture/.style={line width=0.75pt}} 

\begin{tikzpicture}[x=0.75pt,y=0.75pt,yscale=-1,xscale=1]

\draw    (272.97,141.4) .. controls (287.55,115.39) and (319.21,103.36) .. (354.56,110.65) ;
\draw [shift={(356.17,111)}, rotate = 192.53] [color={rgb, 255:red, 0; green, 0; blue, 0 }  ][line width=0.75]    (10.93,-3.29) .. controls (6.95,-1.4) and (3.31,-0.3) .. (0,0) .. controls (3.31,0.3) and (6.95,1.4) .. (10.93,3.29)   ;

\draw (147.5,122.18) node [anchor=north west][inner sep=0.75pt]  [rotate=-25]  {$i_{2} \ \rightarrow \ i_{1} \ \rightarrow $};
\draw (136.71,195.31) node [anchor=north west][inner sep=0.75pt]  [rotate=-340]  {$j_{2} \ \rightarrow \ j_{1} \ \rightarrow $};
\draw (226,166.4) node [anchor=north west][inner sep=0.75pt]    {$l$};
\draw (232.56,161.57) node [anchor=north west][inner sep=0.75pt]  [rotate=-340]  {$\rightarrow \ m_{3} \ \rightarrow \ m_{2} \ \rightarrow \ m_{1}$};
\draw (238.75,168.96) node [anchor=north west][inner sep=0.75pt]  [rotate=-20]  {$\rightarrow \ k$};

\end{tikzpicture}
\caption{}
\label{graph_ex1}
\end{subfigure}
\begin{subfigure}{0.49\textwidth}
\centering
\tikzset{every picture/.style={line width=0.75pt}} 

\begin{tikzpicture}[x=0.75pt,y=0.75pt,yscale=-1,xscale=1]

\draw (154.8,158) node [anchor=north west][inner sep=0.75pt]    {$i$};
\draw (192,125) node [anchor=north west][inner sep=0.75pt]    {$j_{2} \ \rightarrow \ j_{1} \ $};
\draw (194,180) node [anchor=north west][inner sep=0.75pt]    {$k_{2} \ \rightarrow \ k_{1}$};
\draw (281.9,156.8) node [anchor=north west][inner sep=0.75pt]    {$l$};
\draw (165,146) node [anchor=north west][inner sep=0.75pt]  [rotate=-330]  {$\rightarrow $};
\draw (172,170) node [anchor=north west][inner sep=0.75pt]  [rotate=-30]  {$\rightarrow $};
\draw (264,138) node [anchor=north west][inner sep=0.75pt]  [rotate=-30]  {$\rightarrow $};
\draw (260,183) node [anchor=north west][inner sep=0.75pt]  [rotate=-330]  {$\rightarrow $};

\end{tikzpicture}
\caption{}
\label{graph_ex2}
\end{subfigure}
\caption{\small{Two examples of a subgraph of the envy graph.  Figure \ref{graph_ex1} is a combination of Configurations 1, 2, and 3.  Figure \ref{graph_ex2} is a combination of Configurations 1 and 2.}}
\end{figure}

\noindent\textbf{Step 4}: \textbf{Update each agent's set of residual objects.}

For each $i \in N$, let $O_i^r = O^{r-1}_i \backslash E_i^r$.  

\bigskip

Phase 1 terminates at the round when for each $i \in N$, $O^r_i = \emptyset$.
Let $S \in \{1,\hdots,n\}$.
Let $C = \{C_1,\hdots,C_S\}$ be the partition of $N$ such that for each $s \in \{1,\hdots,S\}$, the agents in $C_s$ form a component in the envy graph.\footnote{A component may contain only one agent.}

\bigskip

\noindent
\textbf{Phase 2}: \textbf{Derive an order on the agent set.}

Let $g^0(\omega)~\in~\mathcal{F}$ be defined by setting, for each pair $i,j \in N$, 
\[
g^0_i(\omega) < g^0_j(\omega)~\iff~\omega_i \prec \omega_j.
\]  

Let $g(\omega) \in \mathcal{F}$ be defined by setting for each $s \in \{1,\hdots,S\}$,
\[
\bigcup_{i \in C_s} g_i(\omega) = \bigcup_{i \in C_s} g^0_i(\omega). 
\]

Let $s \in \{1,\hdots,S\}$, and let $k,k' \in \{1,\hdots,n\}$.  
Let $\{i_1,\hdots,i_k,j_1,\hdots,j_{k'},l\} \subseteq N$.  

\noindent
\textbf{Case 1}: $C_s = \{i_1,\hdots,i_k\}$ is such that 
\[
i_k~\rightarrow~i_{k-1}~\rightarrow \hdots \rightarrow~i_{2}~\rightarrow~i_1.
\]
Let $g(\omega) \in \mathcal{F}$ be such that for each pair $m,m' \in \{1,\hdots,k\}$,
\[
g_{i_m}(\omega) < g_{i_{m'}}(\omega)~\iff~m < m'. 
\]

\noindent
\textbf{Case 2}: The agents in $C_s$ form a component in the envy graph that is as simple as one of the three configurations described above.    

\noindent\textbf{Configuration 1}: $C_s = \{i_1,\hdots,i_k,j_1,\hdots,j_{k'},l\}$ is such that the agents in $C_s$ form a component in the envy graph as described in Figure \ref{Configuration 1}. 
Let $g(\omega) \in \mathcal{F}$ be such that 
\begin{enumerate}
\item for each $i \in C_s$, $g_l(\omega) \leq g_i(\omega)$;
\item for each pair $m,m' \in \{1,\hdots,k\}$, $g_{i_m}(\omega) < g_{i_{m'}}(\omega)~\iff~m < m'$; and
\item for each pair $m,m' \in \{1,\hdots,k'\}$, $g_{j_m}(\omega) < g_{j_{m'}}(\omega)~\iff~m < m'$.
\end{enumerate}
First, the order between $i_1$ and $j_1$ is determined as follows.
Let $g(\omega) \in \mathcal{F}$ be such that if $\omega_{i_1} \prec \omega_{j_1}$ and there is no $q' \in \{2,\hdots,k'\}$ such that $\omega_{i_1} \prec \omega_{j_{q'}} \precsim CR^\omega_{i_1}(P),CR^\omega_{j_1}(P) \precsim \omega_{j_1}$, then $g_{i_1}(\omega)~<~g_{j_1}(\omega)$; otherwise, $g_{i_1}(\omega) > g_{j_1}(\omega)$.  
Without loss of generality, suppose that $g_{i_1}(\omega) < g_{j_1}(\omega)$.  Next find a pair of agents $i,j \in \{i_1,\hdots,i_k\}$ such that agent $j_1$ is located between agents $i$ and $j$ by following the procedure below. Then apply an analogous procedure to agent $j_2$, and so on. 
\begin{enumerate}
\item[]\textbf{Step 1}: Identify the smallest $p \in \{2,\hdots,k\}$ such that either (i) $\omega_{j_1} \prec \omega_{i_p}$, or (ii)~$\omega_{i_p}~\prec~\omega_{j_1}$ and there is $q' \in \{2,\hdots,k'\}$ such that $\omega_{i_p} \prec \omega_{j_{q'}} \precsim CR^\omega_{j_1}(P) \precsim \omega_{j_1}$. 
Let $p^1~=~p$ if such a value exists; otherwise let $p^1 = k+1$.  
Let $g(\omega) \in \mathcal{F}$ be such that $g_{i_{p^1-1}}(\omega)~<~g_{j_1}(\omega)~<~g_{i_{p^1}}(\omega)$.  If $p^1 = k+1$, the order on $C_s$ is determined.  

\item[]\textbf{Step $\bm{q \geq 2}$}: Identify the smallest $p \in \{p^{q-1},\hdots,k\}$ such that either (i) $\omega_{j_q} \prec \omega_{i_p}$, or (ii)~$\omega_{i_p}~\prec~\omega_{j_q}$ and there is $q' \in \{q+1,\hdots,k'\}$ such that $\omega_{i_p}~\prec~\omega_{j_{q'}}~\precsim CR^\omega_{j_q}(P) \precsim \omega_{j_q}$. 
Let $p^q = p$ if such a value exists; otherwise let $p^q = k+1$.  
Let $g(\omega) \in \mathcal{F}$ be such that $g_{i_{p^q-1}}(\omega) < g_{j_1}(\omega) < g_{i_{p^q}}(\omega)$.  If $p^q = k+1$, the order on $C_s$ is determined. 
\end{enumerate} 

\noindent\textbf{Configuration 2}: $C_s = \{i_1,\hdots,i_k,j_1,\hdots,j_{k'},l\}$ is such that the agents in $C_s$ form a component in the envy graph as described in Figure \ref{Configuration 2}. 
Let $g(\omega) \in \mathcal{F}$ be such that 
\begin{enumerate}
\item for each $i \in C_s$, $g_l(\omega) \geq g_i(\omega)$;
\item for each pair $m,m' \in \{1,\hdots,k\}$, $g_{i_m}(\omega) < g_{i_{m'}}(\omega)~\iff~m < m'$; and
\item for each pair $m,m' \in \{1,\hdots,k'\}$, $g_{j_m}(\omega) < g_{j_{m'}}(\omega)~\iff~m < m'$.
\end{enumerate}
First, the order between $i_1$ and $j_1$ is determined as follows.
Let $g(\omega) \in \mathcal{F}$ be such that if $\omega_{i_1} \prec \omega_{j_1}$ and there is no $q' \in \{2,\hdots,k'\}$ such that $\omega_{i_1} \prec \omega_{j_{q'}} \precsim CR^\omega_{i_1}(P),CR^\omega_{j_1}(P) \precsim \omega_{j_1}$, then $g_{i_1}(\omega)~<~g_{j_1}(\omega)$; otherwise, $g_{i_1}(\omega) > g_{j_1}(\omega)$.  
Without loss of generality, suppose that $g_{i_1}(\omega) < g_{j_1}(\omega)$.  Next find a pair of agents $i,j \in \{i_1,\hdots,i_k\}$ such that agent $j_1$ is located between agents $i$ and $j$ by following the procedure below. Then apply an analogous procedure to agent $j_2$, and so on. 
\begin{enumerate}
\item[]\textbf{Step 1}: Identify the smallest $p \in \{2,\hdots,k\}$ such that either (i) $\omega_{j_1} \prec \omega_{i_p}$, or (ii)~$\omega_{i_p}~\prec~\omega_{j_1}$ and there is $q' \in \{2,\hdots,k'\}$ such that $\omega_{i_p} \prec \omega_{j_{q'}} \precsim CR^\omega_{j_1}(P) \precsim \omega_{j_1}$. 
Let $p^1~=~p$ if such a value exists, otherwise let $p^1 = k+1$.  
Let $g(\omega) \in \mathcal{F}$ be such that $g_{i_{p^1-1}}(\omega)~<~g_{j_1}(\omega)~<~g_{i_{p^1}}(\omega)$.  If $p^1 = k+1$, the order on $C_s$ is determined.  

\item[]\textbf{Step $\bm{q \geq 2}$}: Identify the smallest $p \in \{p^{q-1},\hdots,k\}$ such that either (i) $\omega_{j_q} \prec \omega_{i_p}$, or (ii)~$\omega_{i_p}~\prec~\omega_{j_q}$ and there is $q' \in \{q+1,\hdots,k'\}$ such that $\omega_{i_p}~\prec~\omega_{j_{q'}}~\precsim CR^\omega_{j_q}(P) \precsim \omega_{j_q}$. 
Let $p^q = p$ if such a value exists; otherwise let $p^q = k+1$.  
Let $g(\omega) \in \mathcal{F}$ be such that $g_{i_{p^q-1}}(\omega) < g_{j_1}(\omega) < g_{i_{p^q}}(\omega)$.  If $p^q = k+1$, the order on $C_s$ is determined. 
\end{enumerate} 

\noindent\textbf{Configuration 3}: $C_s = \{i_1,\hdots,i_k\}$ is such that the agents in $C_s$ form a component in the envy graph as described in Figure \ref{Configuration 3}.
Let $g(\omega) \in \mathcal{F}$ be such that for each pair $m,m' \in \{1,\hdots,k\}$,
\[
g_{i_m}(\omega) < g_{i_{m'}}(\omega)~\iff~m < m'. 
\]

\noindent
\textbf{Case 3}: The agents in $C_s$ form a component in the envy graph that is a combination of the three configurations and maximal envy chains.   
Suppose that there is $\{i_1,\hdots,i_k\} \subset C_s$ such that agents $i_1,\hdots,i_k$ form a component in the envy graph as described in Figure \ref{Configuration 3}.  Eliminate the arrow from agent $i_k$ to agent $i_1$.
Identify a subset $C'_s  = \{i_1,\hdots,i_k,j_1,\hdots,j_{k'},l\} \subseteq C_s$ such that $C'_s$ form a component of the envy graph that is configured as in either (1) or (2), and adding any agent to $C'_s$ is no longer configured as in any of (1) and (2).  Identify the order on $C'_s$ by following the procedure as described in Case 2.  Update the envy graph as follows: (i) relabel the agents in $C'_s$ in such a way that $g_{i_1}(\omega) < \hdots < g_{i_{k+k'+1}}(\omega)$, and define the chain $i_{k+k'+1} \rightarrow \hdots \rightarrow i_1$.  (ii) For each $i \in C'_s$ and each $j \in C_s \backslash C'_s$, if agent $i$ points to agent $j$ in the original envy graph, agent $i$ still points to agent $j$.  Similarly, for each $i \in C'_s$ and each $j \in C_s \backslash C'_s$, if agent $j$ points to agent $i$ in the original envy graph, agent $j$ still points to agent $i$.   
Repeat the procedure with the updated envy graph.  
Whenever identifying the order on a subset $C'_s$, for each pair of agents who are in $C'_s$ and are not in the same chain, their orders are determined according to their endowments and assignments.   
Hence, no matter which subset of agents we choose, we indeed obtain a unique order on the agent set.
Finally, if the updated envy graph is a chain, identify the order by following the procedure described in Case 1. 
Example \ref{update2} illustrates this procedure.  

\bigskip

\begin{example}\label{update2}
Let $N = \{1,\hdots,6\}$.  Let $\prec \in \mathcal{L}$ be defined by $\omega_1 \prec \omega_2 \prec \hdots \prec \omega_6$.  Let $P \in \mathcal{P}^N$ be a single-peaked preference profile with respect to $\prec$  such that 
\begin{align*}
P_1 &: \omega_4,~\omega_5,\hdots \\
P_2 &: \omega_3,~\omega_4,\hdots \\
P_3 &: \omega_3,\hdots \\
P_4 &: \omega_3,~\omega_2,\hdots \\
P_5 &: \omega_3,~\omega_2,~\omega_1,\hdots \\
P_6 &: \omega_1,\hdots.
\end{align*}
Then 
\[
CR^\omega(P) = (\omega_5,\omega_4,\omega_3,\omega_2,\omega_1,\omega_6). 
\]
The agents in $N$ form an envy graph as described in Figure \ref{example_foralg2}. 
First, eliminate the arrow from agent 5 to agent 3. Similarly, eliminate the arrows from agent 6 to agents 2, 3, and 4, respectively.  
Identify a subset $C' \subset N$ such that $C'$ form a subgraph that is configured as either (1) or (2), and adding any agent to $C'$ is no longer configured as either (1) or (2).  There are two possible cases~(Figure~\ref{example_org2}).
\begin{figure}[H]
\centering
\tikzset{every picture/.style={line width=0.75pt}} 

\begin{tikzpicture}[x=0.75pt,y=0.75pt,yscale=-1.2,xscale=1.2]

\draw [color={rgb, 255:red, 0; green, 0; blue, 0 }  ,draw opacity=0.4 ]   (186.83,181.97) .. controls (218.73,195.33) and (219.46,195.42) .. (248.1,161.94) ;
\draw [shift={(248.97,160.92)}, rotate = 490.53] [color={rgb, 255:red, 0; green, 0; blue, 0 }  ,draw opacity=0.4 ][line width=0.75]    (10.93,-3.29) .. controls (6.95,-1.4) and (3.31,-0.3) .. (0,0) .. controls (3.31,0.3) and (6.95,1.4) .. (10.93,3.29)   ;
\draw [color={rgb, 255:red, 0; green, 0; blue, 0 }  ,draw opacity=0.4 ]   (150.83,136.47) .. controls (180.57,130.13) and (181.78,127.12) .. (204.07,118.63) ;
\draw [shift={(205.83,117.97)}, rotate = 519.44] [color={rgb, 255:red, 0; green, 0; blue, 0 }  ,draw opacity=0.4 ][line width=0.75]    (10.93,-3.29) .. controls (6.95,-1.4) and (3.31,-0.3) .. (0,0) .. controls (3.31,0.3) and (6.95,1.4) .. (10.93,3.29)   ;
\draw [color={rgb, 255:red, 0; green, 0; blue, 0 }  ,draw opacity=0.4 ]   (151.05,146.47) .. controls (181.45,150.33) and (191.46,153.26) .. (206.95,159.3) ;
\draw [shift={(208.66,159.97)}, rotate = 201.5] [color={rgb, 255:red, 0; green, 0; blue, 0 }  ,draw opacity=0.4 ][line width=0.75]    (10.93,-3.29) .. controls (6.95,-1.4) and (3.31,-0.3) .. (0,0) .. controls (3.31,0.3) and (6.95,1.4) .. (10.93,3.29)   ;
\draw [color={rgb, 255:red, 0; green, 0; blue, 0 }  ,draw opacity=0.4 ]   (154.55,141.47) -- (235.33,140.98) ;
\draw [shift={(237.33,140.97)}, rotate = 539.65] [color={rgb, 255:red, 0; green, 0; blue, 0 }  ,draw opacity=0.4 ][line width=0.75]    (10.93,-3.29) .. controls (6.95,-1.4) and (3.31,-0.3) .. (0,0) .. controls (3.31,0.3) and (6.95,1.4) .. (10.93,3.29)   ;

\draw    (145.6,151.8) -- (170.31,166.09) ;
\draw [shift={(172.04,167.1)}, rotate = 210.05] [color={rgb, 255:red, 0; green, 0; blue, 0 }  ][line width=0.75]    (10.93,-3.29) .. controls (6.95,-1.4) and (3.31,-0.3) .. (0,0) .. controls (3.31,0.3) and (6.95,1.4) .. (10.93,3.29)   ;
\draw    (144.35,130.52) -- (168.27,115.2) ;
\draw [shift={(169.95,114.12)}, rotate = 507.36] [color={rgb, 255:red, 0; green, 0; blue, 0 }  ][line width=0.75]    (10.93,-3.29) .. controls (6.95,-1.4) and (3.31,-0.3) .. (0,0) .. controls (3.31,0.3) and (6.95,1.4) .. (10.93,3.29)   ;

\draw    (220,115.97) -- (239.32,128.16) ;
\draw [shift={(240.83,129.47)}, rotate = 221.05] [color={rgb, 255:red, 0; green, 0; blue, 0 }  ][line width=0.75]    (10.93,-3.29) .. controls (6.95,-1.4) and (3.31,-0.3) .. (0,0) .. controls (3.31,0.3) and (6.95,1.4) .. (10.93,3.29)   ;
\draw    (222,163.97) -- (243.93,151.68) ;
\draw [shift={(245.55,150.52)}, rotate = 504.31] [color={rgb, 255:red, 0; green, 0; blue, 0 }  ][line width=0.75]    (10.93,-3.29) .. controls (6.95,-1.4) and (3.31,-0.3) .. (0,0) .. controls (3.31,0.3) and (6.95,1.4) .. (10.93,3.29)   ;

\draw (135.5,133.9) node [anchor=north west][inner sep=0.75pt]    {$6$};
\draw (172,102.4) node [anchor=north west][inner sep=0.75pt]    {$1\ \rightarrow \ 2\ $};
\draw (174.4,162.1) node [anchor=north west][inner sep=0.75pt]    {$5\ \rightarrow \ 4$};
\draw (245.2,133.2) node [anchor=north west][inner sep=0.75pt]    {$3$};

\end{tikzpicture}
\caption{Envy graph formed by the agents in $N$}
\label{example_foralg2}
\end{figure}
\begin{figure}[H]
\centering
\begin{subfigure}{0.49\textwidth}
\centering
\tikzset{every picture/.style={line width=0.75pt}} 

\begin{tikzpicture}[x=0.75pt,y=0.75pt,yscale=-1,xscale=1]

\draw (172,102.4) node [anchor=north west][inner sep=0.75pt]    {$1\ \rightarrow \ 2\ $};
\draw (174,160) node [anchor=north west][inner sep=0.75pt]    {$5\ \rightarrow \ 4$};
\draw (253.5,132.4) node [anchor=north west][inner sep=0.75pt]    {$3$};
\draw (227.44,157.92) node [anchor=north west][inner sep=0.75pt]  [rotate=-330]  {$\rightarrow $};
\draw (234.54,108.42) node [anchor=north west][inner sep=0.75pt]  [rotate=-30]  {$\rightarrow $};

\end{tikzpicture}
\caption{}
\label{example_alg21}
\end{subfigure}
\begin{subfigure}{0.49\textwidth}
\centering
\tikzset{every picture/.style={line width=0.75pt}} 

\begin{tikzpicture}[x=0.75pt,y=0.75pt,yscale=-1,xscale=1]

\draw (127,132.4) node [anchor=north west][inner sep=0.75pt]    {$6$};
\draw (172,102.4) node [anchor=north west][inner sep=0.75pt]    {$1\ \rightarrow \ 2\ $};
\draw (174,160) node [anchor=north west][inner sep=0.75pt]    {$5\ \rightarrow \ 4$};
\draw (150,152) node [anchor=north west][inner sep=0.75pt]  [rotate=-30]  {$\rightarrow $};
\draw (147,118) node [anchor=north west][inner sep=0.75pt]  [rotate=-330]  {$\rightarrow $};

\end{tikzpicture}
\caption{}
\label{example_alg22}
\end{subfigure}
\caption{Possible selection of a subgraph}
\label{example_org2}
\end{figure}

Without loss of generality, let us choose \ref{example_alg21}.  By following the description in Configuration~1, 
\begin{align}\label{update1}
g_3(\omega) < g_2(\omega) < g_1(\omega) < g_4(\omega) < g_5(\omega).
\end{align} 
Update the envy graph according to \eqref{update1}:
\[
6~\rightarrow~5~\rightarrow~4~\rightarrow~1~\rightarrow~2~\rightarrow~3.
\] 
By following the description in Case 1, we obtain 
\[
g_3(\omega) < g_2(\omega) < g_1(\omega) < g_4(\omega) < g_5(\omega) < g_6(\omega).
\]
\end{example}

\bigskip

We show that the crawler induced by an endowment profile and the sequential priority rule induced by the order on the agent set given by the mapping select the same allocation,

\begin{claim}\label{theorem_ equivalence}
For each $\omega \in \mathcal{X}$, 
\[
CR^\omega(P) = SP^{g(\omega)}(P).
\]
\end{claim}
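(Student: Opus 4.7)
The plan is to order the agents as $a_1, a_2, \ldots, a_n$ by increasing $g(\omega)$-value, so that $a_k$ is the $k$-th agent to pick in $SP^{g(\omega)}$, and then prove by induction on $k$ that
\[CR^\omega_{a_k}(P) = X_{a_k}\left(O \setminus \bigcup_{j<k} \{CR^\omega_{a_j}(P)\}\right).\]
Once this is established, the definition of $SP^{g(\omega)}$ immediately yields $SP^{g(\omega)}_{a_k}(P) = CR^\omega_{a_k}(P)$ for every $k$, and the claim follows.

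For the base case, I would show that agent $a_1$ (whom $g$ ranks first) satisfies $CR^\omega_{a_1}(P) = X_{a_1}(O)$; that is, $a_1$ receives his globally most preferred object under the crawler. By construction, the agent with the smallest value of $g(\omega)$ must emerge as the terminal node of a linearized generalized chain at round $1$ of the procedure defining $g$, and by the definition of envy at round $1$ such a terminal agent does not envy anyone, which is exactly the condition $X_{a_1}(O) = CR^\omega_{a_1}(P)$.

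For the inductive step, assume the claim holds for $a_1, \ldots, a_{k-1}$. The induction hypothesis implies that after the first $k-1$ picks of $SP^{g(\omega)}$, the remaining objects are $O \setminus \{CR^\omega_{a_j}(P) : j < k\}$. I need to show that $CR^\omega_{a_k}(P)$ is agent $a_k$'s most preferred object in this remaining set. The key idea is that whenever $a_k$ envies some other agent at some round $r$ of the procedure---equivalently, whenever $X_{a_k}(O^{r-1}_{a_k}) = CR^\omega_{a_{k'}}(P)$ for some $a_{k'} \neq a_k$---the construction of $g$ forces $g_{a_{k'}}(\omega) < g_{a_k}(\omega)$, so $k' < k$. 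Chasing these envy pointers across rounds should show that every object strictly preferred by $a_k$ to $CR^\omega_{a_k}(P)$ is the crawler-assignment of some $a_j$ with $j < k$, while $CR^\omega_{a_k}(P)$ itself is not among those assignments.

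The main obstacle is verifying that Step~$2$ of the construction of $g$, in which two envy chains meeting at a common envied agent $l$ are interleaved using the values $g^{r-1}$, yields an order consistent with the actual temporal sequence of assignments made by the crawler. I expect this to require a careful case analysis paralleling Cases~$1$ and $2$ of Step~$2$: for each pair of chains merging at $l$ at round $r$, one needs a lemma that pins down the relative order in which the feeder agents of the two chains are served by the crawler and shows that it matches the interleaving dictated by $g^{r-1}$. Since $g^{r-1}$ encodes both the original endowment order and all earlier resolved chains, this lemma will rely on the geometric structure of the single-peaked domain, and in particular on tracking how endowment positions shift during the sweeping procedure---most likely through an auxiliary counting argument, in the spirit of the cardinality comparison used in the proof of Theorem~\ref{symmetry}, relating the number of agents whose crawler assignments fall in a given interval of $\prec$ to the order produced by $g$.
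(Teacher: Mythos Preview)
Your inductive plan is sound and rests on exactly the right observation: if at some round $r$ of the procedure defining $g$ agent $i$ envies agent $j$ (that is, $X_i(O^{r-1}_i)=CR^\omega_j(P)$ with $j\neq i$), then the construction forces $g_j(\omega)<g_i(\omega)$. From this, your induction goes through immediately: every object $a_k$ strictly prefers to $CR^\omega_{a_k}(P)$ appears as $X_{a_k}(O^{r-1}_{a_k})$ at some round, hence is the crawler assignment of some agent $a_k$ envies, hence of some $a_j$ with $j<k$.

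The paper proves the claim differently, by a short contradiction. Assuming $CR^\omega(P)\neq SP^{g(\omega)}(P)$, it follows a chain $i_1,\ldots,i_k$ in which $SP^{g(\omega)}_{i_{k'}}(P)=CR^\omega_{i_{k'+1}}(P)$ and each $i_{k'}$ (for $k'<k$) prefers his $SP$-assignment to his $CR$-assignment; efficiency of the crawler forces the chain to terminate at some $i_k$ who instead prefers $CR^\omega_{i_k}(P)$. Then $g_{i_{k-1}}(\omega)<g_{i_k}(\omega)$ from the sequential-priority side, while the fact that $i_{k-1}$ prefers $CR^\omega_{i_k}(P)$ to his own crawler assignment means $i_{k-1}$ envies $i_k$ at some round, giving $g_{i_k}(\omega)<g_{i_{k-1}}(\omega)$---a contradiction. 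This is the same key property you isolate, used once rather than inductively.

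Where you go astray is in your ``main obstacle.'' You do \emph{not} need the interleaving in Step~2 to reproduce the temporal order in which the crawler serves agents; that is a much stronger statement than the claim requires, and it is not even true in general. All you need is the monotonicity property above---envy at any round implies lower $g$-rank for the envied agent---and this is essentially immediate from the construction: within each envy chain the arrow direction is preserved when chains are linearized, and the Step~2 interleaving only decides the relative placement of agents from \emph{different} chains who are not linked by envy. No counting argument in the style of Theorem~\ref{symmetry} is needed here. Drop the temporal-consistency detour and your proof is essentially complete; the paper's contradiction argument is just a more compact packaging of the same idea.
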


\begin{proof}[Proof of Claim \ref{theorem_ equivalence}]
Suppose that there is a sequence of agents $\{i_1,\cdots,i_k\}~\subseteq~N$ for some $k \in \{1,\hdots,n\}$ such that 
\begin{itemize}
\item for each $k' \in \{1,\cdots,k-1\}$, 
\begin{align*}
SP^{g(\omega)}_{i_{k'}}(P) = CR^\omega_{i_{k'+1}}(P)~P_{i_{k'}}~CR^\omega_{i_{k'}}(P);~\text{and}~
\end{align*} 
\item $CR^\omega_{i_k}(P)~P_{i_k}~SP^{g(\omega)}_{i_{k}}(P)$.
\end{itemize}
This implies that $g_{i_{k-1}}(\omega) < g_{i_k}(\omega)$.  However, because $CR^\omega_{i_k}(P)~P_{i_{k-1}}~CR^\omega_{i_{k-1}}(P)$, there is $r \in \n$ such that agent $i_{k-1}$ envies agent $i_k$ at Round $r$.  This implies that $g_{i_k}(\omega) < g_{i_{k-1}}(\omega)$, contradicting the first inequality. 
\end{proof}

\bigskip

Let $\omega,\omega' \in \mathcal{X}$ be such that if there is a pair $i,j \in N$ such that $\omega'_i = \omega_j$, both agents $i$ and~$j$ belong to either an envy chain or a sugraph of an envy graph that is configured as in (3).  We show that if the allocations are the same for $\omega$ and $\omega'$, the two endowment profiles are the same.  

\begin{claim} \label{shuffle}
Let $\omega,\omega' \in \mathcal{X}$ be such that for each $N' \subseteq N$ such that the agents in $N'$ form either an envy chain or a subgraph of the envy graph that is configured as in (3) at $\omega$, we have $\bigcup_{i \in N'} \omega_i = \bigcup_{i \in N'} \omega'_i$.
If $CR^{\omega'}(P) = CR^\omega(P)$, then $\omega' = \omega$.
\end{claim} 

\begin{proof}[Proof of Claim \ref{shuffle}]
Let $k \in \{1,\hdots,n\}$.
Let $N' = \{1,\cdots,k\} \subseteq N$ be such that at the end of Phase 1,
\begin{itemize}
\item the agents in $N'$ form a chain $1 \rightarrow \hdots \rightarrow k$;\footnote{For each $l \in \{1,\hdots,k-2\}$, agent $l$ may also point to agent $l' \geq l+2$.} and
\item no agent points to agent 1. 
\end{itemize}

Let $\omega,\omega' \in \mathcal{X}$ be such that $CR^{\omega'}(P) = CR^\omega(P)$, $\bigcup_{i \in N'} \omega_i = \bigcup_{i \in N'} \omega'_i$, and if there is a pair $i,j \in N \backslash N'$ such that $\omega_i = \omega'_j$, both agents $i$ and $j$ belong to either an envy chain or a subgraph configured as in (3).
We show that $\omega_{N'} = \omega'_{N'}$.  

Let $O' \equiv \bigcup_{i \in \{1,\hdots,k\}} CR^\omega_{i}(P)$.
Notice that for each $k' \in \{2,\hdots,k-1\}$, there is no $l~\in~\{1,\hdots,k-1\}$ such that $l <k'$ and agent $l$'s assignment is between agents $k'$ and $k'+1$'s assignments.  
Moreover, for each $k' \in \{1,\hdots,k-1\}$, if there is $o~\in~O~\backslash~O'$ such that object $o$ is between agents $k'$ and $k'+1$'s assignments, the agent who receives object $o$ does not envy agent $k'+1$.  
The proof is by induction on $k$.  

Suppose that $k = 1$.  
Because $\bigcup_{i \in N'} \omega_i = \bigcup_{i \in N'} \omega'_i$, we have $\omega_{N'} = \omega'_{N'}$.  

Let $K \in \{2,\cdots,n\}$.  Suppose that for each $k \leq K-1$, we have $\omega_{N'} = \omega'_{N'}$. 
Now we consider $k = K$.  We show that $\omega'_K = \omega_K$.  Then by the \textit{induction hypothesis}, $\omega'_{N'} = \omega_{N'}$.  Let $o \in O$ be such that $CR^\omega_K(P) = o$.  
Because $CR^{\omega'}(P) = CR^\omega(P)$, at both $\omega$ and $\omega'$, for each $k'~\in~\{1,\hdots,K-1\}$, agent $k'+1$ receives his assignment earlier than agent $k'$.  In particular, agent $K$ is the first agent who receives his assignment in $N'$.  This means that there is no $l~\in~\{1,\hdots,K-1\}$ such that agent $l$'s endowment is between object $o$ and agent $K$'s endowment.  

\noindent
\textbf{Case 1}: Either for each $i \in N'$, $\omega_i \precsim o$ or for each $i \in N'$, $o \precsim \omega_i$. 
Because there is no $l \in \{1,\hdots,K-1\}$ such that agent $l$'s endowment is between object $o$ and agent $i_K$'s endowment, $\omega_{K} = \omega'_{K}$.    

\bigskip

Otherwise, there is a pair $i,j \in N'$ such that $\omega_i \precsim o \precsim \omega_j$, and there is no $l \in N' \backslash \{i,j\}$ such that either $\omega_i \prec \omega_l \precsim o$ or $o \precsim \omega_l \prec  \omega_j$.  Let $i,j \in N'$ be such a pair.  

\noindent
\textbf{Case 2}:  Either $i = K$ or $j = K$.  Without loss of generality, suppose that $i = K$. So $\omega_{K}~\precsim~o~\prec~\omega_j$.  
Let $M \subset N$ be such that for each $a \in M$, agent $a$ receives his assignment earlier than agent $K$ when the endowment profile is $\omega$.  
Let 
\begin{align*}
&S_1 \equiv \left\{a \in N:~\omega_{K} \prec \omega_a \prec \omega_j~\text{and}~CR^\omega_a(P) \precsim \omega_{K}\right\}, \\
&T_1 \equiv \left\{a \in N: \omega_{K} \prec \omega_a \prec \omega_j,~CR^{\omega}_a(P) \precsim \omega_a,~\text{and}~\omega_K \prec CR^{\omega}_a(P) \prec o\right\},~\text{and} \\
&U_1 \equiv \left\{a \in N: \omega_{K} \prec \omega_a \prec CR^{\omega}_a(P) \prec o \right\}.\\
\end{align*}
Because $CR^\omega_{K}(P) = o$, we have $|M~\cap(S_1~\cup~T_1~\cup~U_1)| = |o - \omega_{K}|-1$. 
Let 
\[
E_1 = \{o' \in O:~\text{there is}~a \in S_1~\cup~T_1~\cup~U_1~\text{such that}~\omega_a = o'\}. 
\]
Suppose that $\omega'_{K} = \omega_j$.  Let $i' \in \{1,\hdots,K-1\}$ be such that $\omega'_{i'} = \omega_{K}$.  Hence, $\omega'_{i'}~ \precsim o \prec \omega'_{K}$.  
Let
\begin{align*}
&S_2 \equiv \left\{a \in N:~\omega'_{i'} \prec \omega'_a \prec \omega'_{K}~\text{and}~CR^{\omega'}_a(P) \precsim \omega'_{i'}\right\}, \\
&T_2 \equiv \left\{a \in N: \omega'_{i'} \prec \omega'_a \prec \omega'_{K},~CR^{\omega'}_a(P) \precsim \omega_a,~\text{and}~\omega'_{i'} \prec CR^{\omega'}_a(P) \prec o\right\},~\text{and} \\
&U_2 \equiv \left\{a \in N: \omega'_{i'} \prec \omega'_a \prec CR^{\omega'}_a(P) \prec o \right\}.
\end{align*}
Notice that $N'~\cap~(S_1~\cup~T_1~\cup~U_1) = \emptyset$.  This means that for each $a \in N$ such that $\omega'_a \in E_1$, $CR^{\omega'}_a(P) \prec o$.  Otherwise, there are a pair $a,a' \in N'$ and $b \in N \backslash N'$ with either $\omega_b \in E_1$ or $\omega'_b \in E_1$ such that agent $b$ envies agent $a$ and agent $a'$ envies both agents $a$ and $b$.  This means that $b \in N'$, contradicting the fact that $b \notin N'$. 
Hence, $|S_2~\cup~T_2~\cup~U_2| \geq |o - \omega'_{i'}|-1$. 
 
At $\omega'$, the ownership of agent $i'$ can be shifted to object $o$ before agent $K$ is visited.  Hence, if there is $o' \in O'$ such that $o'~P_{i'}~CR^{\omega}_{i'}(P)$ and $o' \precsim o$, we have $CR^{\omega'}_{i'}(P)~P_i~CR^{\omega}_{i'}(P)$, contradicting the hypothesis that $CR^{\omega'}(P) = CR^{\omega}(P)$.  

Suppose that for each $o' \in O'$ such that $o'~P_i~CR^{\omega}_{i'}(P)$, we have $o \prec o'$. 
Without loss of generality, let $m \in \{1,\hdots,K\}$, and $\{o_1,\hdots,o_m\} \subset O$ be such that for each $o'~\in~\{o_1,\hdots,o_m\}$, $o' \in O'$ and $o \prec o_1 \prec \hdots \prec o_m$.  
The fact that agent $K$ receives object $o$ at $\omega'$ implies that the ownership of agent $i'$ is shifted to object $o_1$ earlier than the agent who receives object $o_1$ at $\omega$ is either visited and receives his assignment, or his ownership is shifted to object~$o_1$.  If $o_1~P_{i'}~CR^{\omega}_{i'}(P)$, then $CR^{\omega'}_{i'}(P)~P_{i'}~CR^\omega_{i'}(P)$.  This contradicts the hypothesis that $CR^{\omega'}(P) = CR^{\omega}(P)$.  Suppose that object $o_1$ is assigned to the same agent for both $\omega$~and~$\omega'$.  This means that the ownership of agent $i'$ is shifted to object $o_2$ before the agent who receives object $o_2$ at $\omega$ is either visited and receives his assignment, or his ownership is shifted to object~$o_2$.  If $o_2~P_{i'}~CR^{\omega}_{i'}(P)$, then $CR^{\omega'}_{i'}(P)~P_{i'}~CR^\omega_{i'}(P)$.  This contradicts the hypothesis that $CR^{\omega'}(P) = CR^{\omega}(P)$.  By an analogous argument, $CR^{\omega'}_{i'}(P)~P_i~CR^{\omega}_{i'}(P)$. This contradicts the hypothesis that $CR^{\omega'}(P) = CR^{\omega}(P)$.  Therefore, $\omega'_K = \omega_K$.  

\end{proof}

\bigskip

We are ready to show that mapping $g$ is one-to-one.  

\begin{claim}\label{onetoone}
For each pair $\omega, \omega' \in \mathcal{X}$, 
\[
g(\omega) = g(\omega')~\Longrightarrow~\omega = \omega'.
\]
\end{claim}

\begin{proof}[Proof of Claim \ref{onetoone}]
Let $\omega,\omega' \in \mathcal{X}$ be such that $g(\omega) = g(\omega')$. Hence, $CR^\omega(P) = CR^{\omega'}(P)$.  
Let $i,j \in N$ be such that $\omega'_i = \omega_j$.  
Let $S \in \{1,\hdots,n\}$.  Let $\{C_1,\hdots,C_S\}$ be the partition of $N$ such that for each $s \in \{1,\hdots,S\}$, the agents in $C_s$ form a component in the envy graph.  Let $s \in \{1,\hdots,S\}$ be such that $i \in C_s$.   
Suppose that $j \notin C_s$.  There is $k \in C_s$ such that $g_k(\omega') \neq g_k(\omega)$. 
Hence, $g(\omega) \neq g(\omega')$, contradicting the hypothesis that $g(\omega) = g(\omega')$.  
Suppose that $j \in C_s$.  
Suppose that neither envy chain nor subgraph configured as in (3) to which both agents $i$ and $j$ belong exists. Then there is a pair $k,k' \in C_s$ such that 
\begin{align*}
g_{k}(\omega) < g_{k'}(\omega)~\text{and}~g_k(\omega') > g_{k'}(\omega').  
\end{align*}
Hence, $g(\omega') \neq g(\omega)$, contradicting the hypothesis that $g(\omega) = g(\omega')$. 
Finally, suppose that either an envy chain or a subgraph configured as in (3) to which both agents $i$ and $j$ belong exists.  By Claim \ref{shuffle}, $\omega'_i = \omega_i$.  
\end{proof}

\bigskip

Finally, we show that mapping $g$ is onto. 

\begin{claim} \label{onto}
For each $f \in \mathcal{F}$, there is $\omega \in \mathcal{X}$ such that $g(\omega) = f$.
\end{claim}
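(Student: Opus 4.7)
The plan is to establish \textit{onto}-ness via a counting argument, leveraging the injectivity that Claim~\ref{onetoone} has just established. The key observation is that the domain and codomain of $g$ have the same finite cardinality: an allocation in $\mathcal{X}$ is a bijection from $N$ to $O$ (since $|N| = |O| = n$ and no two agents share an object), so $|\mathcal{X}| = n!$; likewise an element of $\mathcal{F}$ is a bijection from $\{1,\ldots,n\}$ to $N$, so $|\mathcal{F}| = n!$ as well.

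Given these matching cardinalities, I would invoke the elementary fact that any injection between two finite sets of the same size is automatically a bijection, and hence is surjective. Since Claim~\ref{onetoone} already provides injectivity of $g: \mathcal{X} \to \mathcal{F}$, this immediately yields that for every $f \in \mathcal{F}$ there exists $\omega \in \mathcal{X}$ with $g(\omega) = f$.

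I do not expect any genuine obstacle here, because the hard work of the proof is already done in Claim~\ref{onetoone}. The only subtlety worth flagging is that one should \emph{not} try to prove ontoness by running a direct inverse construction (e.g., starting from $f$ and building $\omega$ step by step), since the procedure that defines $g$ is recursive, depends on the outcome $CR^\omega(P)$, and is not obviously invertible; this direct route would essentially duplicate the intricate casework in Claims~\ref{shuffle} and~\ref{onetoone}. The cardinality argument sidesteps all of that. After proving Claim~\ref{onto}, Theorem~\ref{thm_random} follows immediately by combining Claim~\ref{theorem_ equivalence} with the bijectivity of $g$: summing $p^{CR^\omega(P)} = p^{SP^{g(\omega)}(P)}$ uniformly over $\omega \in \mathcal{X}$ reindexes to the uniform sum over $f \in \mathcal{F}$, giving $RCR(P) = RP(P)$.
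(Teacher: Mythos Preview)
Your proposal is correct and uses essentially the same counting argument as the paper: both exploit that $g$ is injective (Claim~\ref{onetoone}) together with $|\mathcal{X}| = |\mathcal{F}| = n!$ to conclude that $g$ is a bijection. The paper's presentation routes the count through the fibers $\{\omega: CR^\omega(P)=x\}$ and $\{f: SP^f(P)=x\}$ (invoking Claim~\ref{theorem_ equivalence} as well to compare them), which packages the fiber-wise equality needed for Theorem~\ref{thm_random} into the same step, but the core idea is identical to yours.
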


\begin{proof}[Proof of Claim \ref{onto}]
By Claims \ref{theorem_ equivalence} and \ref{onetoone}, for each $x \in \mathcal{X}$,
\[\left|\left\{\omega \in \mathcal{X}: CR^\omega(P) = x\right\}\right| \leq \left|\left\{f \in \mathcal{F}: SP^f(P) = x\right\}\right|.\]
Hence,
\[\sum_{x \in \mathcal{X}} \left|\left\{\omega \in \mathcal{X}: CR^\omega(P) = x\right\}\right| \leq \sum_{x \in \mathcal{X}} \left|\left\{f \in \mathcal{F}: SP^f(P) = x\right\}\right|.\]
However, the left-hand side of the inequality is equal to the number of possible endowment profiles and its
right-hand side is equal to the number of orderings over the agents.  Both of these numbers equal to $n!$.  Therefore,
for each $x \in \mathcal{X}$,
\[\left|\left\{\omega \in \mathcal{X}: CR^\omega(P) = x\right\}\right| = \left|\left\{f \in \mathcal{F}: SP^f(P) = x\right\}\right|.\]
This implies that for each $f \in \mathcal{F}$, there is $\omega \in \mathcal{X}$ such that $g(\omega) = f$.
\end{proof}

\section{An example showing that the crawler is not a member of the trading cycles family}\label{example_ap}
For our purpose, it will suffice to formally define the trading cycles family for problems with three agents and three objects.\footnote{Refer to \cite{pu17} for general problems.}
For each $N' = \{i_1,\hdots,i_{n'}\} \subseteq N$ and each $O' \subseteq O$ such that $|N'| = |O'|$, a ``partial allocation for $N'$" is a list $x_{N'} = (x_{i_1},\hdots,x_{i_{n'}})$ such that for each $i \in N'$, $x_i \in O'$, and for each pair $i,j \in N'$ such that $i \neq j$, $x_i \neq x_j$.  We call each $i \in N \backslash N'$ an ``unassigned agent", and each $o \in O \backslash O'$ an ``unassigned object".  Let $\mathcal{X}_{N'}$ be the set of partial allocations for $N'$.   Let $\mathcal{Y} = \bigcup_{N' \subseteq N} \mathcal{X}_{N'}$. Obviously, $\mathcal{X} \subset \mathcal{Y}$.  Let $\overline{\mathcal{Y}} = \mathcal{Y}~\backslash~\mathcal{X}$.  For each $y \in \mathcal{Y}$, let $\overline{N}_y$ be the set of unassigned agents at $y$ and $\overline{O}_y$ be the set of unassigned objects at $y$. 
 
A ``control-rights structure" is a collection of mappings 
\[\left\{\left(c_y,b_y\right):~\overline{O}_y \rightarrow \overline{N}_y \times \left\{\text{ownership},\text{brokerage}\right\}\right\}_{y \in \overline{\mathcal{Y}}}\]
such that 
\begin{itemize}\itemsep-0.2cm
\item[(R1)] At the empty allocation, i.e., when the set of unassigned objects is the entire set of objects, each object is brokered.\footnote{This assumption is only for problems with three agents and three objects. It differs for general problems.}
\item[(R2)] If agent $i$ is the only agent who has not been assigned an object at $y$, he owns all of the unassigned objects at $y$. 
\item[(R3)] If agent $i$ brokers an object at $y$, he does not control any other objects at $y$. 
\end{itemize}   
Moreover, for each pair $y,y' \in \overline{\mathcal{Y}}$ such that $y \subset y'$ and each $i \in \overline{N}_{y'}$ such that there is $o \in \overline{O}_{y'}$ that agent $i$ owns at $y$,\footnote{$y \subset y'$ means that if an agent is assigned an object at $y$, then he is assigned the same object at $y'$.} 
\begin{itemize}\itemsep-0.2cm
\item[(R4)] Agent $i$ owns object $o$ at $y'$.
\item[(R5)] If agent $j \in \overline{N}_{y'}$ brokers object $o' \in \overline{O}_{y'}$ at $y$, then either agent $j$ brokers object $o'$ at $y'$ or agent $i$ owns object $o'$ at $y'$.
\item[(R6)] If agent $j \in \overline{N}_{y'}$ controls object $o' \in \overline{O}_{y'}$ at $y$, then agent $j$ owns object $o$ at $y~\cup~\{(i,o')\}$.\footnote{$(i,o')$ means that agent $i$ is assigned object $o'$.} 
\end{itemize}

\paragraph{Trading Cycles:}(For problems with three agents and three objects.) Let $(c,b)$ be a control-rights structure.  Let $P \in \mathcal{P}^N$. 

Let $x^0$ be the empty allocation.  In each Round $r = 1,2,\hdots,$ some agents are assigned objects.  We denote by $x^{r-1}$ the partial allocation at Round $r-1$.  Note that $x^{r-1} \subset x^r$.  If $x^{r-1} \in \overline{Y}$, the algorithm proceeds with the following steps of Round $r$. 

\noindent\textbf{Step 1}: \textbf{Pointing}. Each $o \in \overline{O}_{x^{r-1}}$ points to the agent who controls it at $x^{r-1}$.  Each $i \in \overline{N}_{x^{r-1}}$ points to his most preferred object in $\overline{O}_{x^{r-1}}$.  

\bigskip

A cycle 
\[o^1 \rightarrow i^1 \rightarrow \hdots o^k \rightarrow i^k \rightarrow o^1\]
is simple if there is an agent in the cycle who owns an object. 

\noindent\textbf{Step 2$\bm{a}$}: \textbf{Executing ``simple" cycles}.  Each agent in each simple cycle is assigned the object he is pointing to.

\noindent\textbf{Step 2$\bm{b}$}: \textbf{Forcing brokers to downgrade their pointing}.  If there are non-simple cycles, we proceed as follows:
\begin{itemize}\itemsep-0.2cm
\item If there is a cycle in which there is a broker $i$ who points to a brokered object, and there is another broker or owner who points to the same object, we force broker $i$ to point to his next most preferred object in $\overline{O}_{x^{r-1}}$ and we return to Step 2$(a)$.
\item Otherwise, we clear all cycles by assigning each agent in each cycle the object he is pointing to. 
\end{itemize}

\noindent\textbf{Step 3}: Each agent who has received his assignment leaves with his assignment. When each agent received his assignment, the algorithm terminates. 

The following example is a problem for which there is no control-rights structure such that the associated rule always select the same allocation as the crawler.  

\begin{example}\label{crawler_pu}
Let $N = \{1,2,3\}$ and $O = \{o_1,o_2,o_3\}$.  Let $\omega = (o_1,o_2,o_3)$.  Let $P \in \mathcal{P}^N$ be such that 
\begin{align*}
P_1 &:~o_2,\hdots \\
P_2 &:~o_1,~o_2,~o_3 \\
P_3 &:~o_1,~o_2,~o_3.
\end{align*}
Then
\begin{align*}
CR^\omega(P) = (o_2,o_1,o_3).  
\end{align*}
There are two control-rights structure such that $TC^{(c,b)}(P) = CR^\omega(P)$;
\begin{enumerate}\itemsep-0.2cm
\item[\textbf{Case $\bm{1}$}]: At the empty allocation, agents 1, 2, and 3 broker objects $o_1$, $o_3$, and $o_2$, respectively.
\item[\textbf{Case $\bm{2}$}]: At the empty allocation, agents 1, 2, and 3 broker objects $o_3$, $o_2$, and $o_1$, respectively.  
\end{enumerate}
Consider Case 1. For each $i \in N$, let $P'_i \in \mathcal{P}^N$ be defined by 
\begin{align*}
P'_i:~o_1,~o_2,~o_3.
\end{align*}
Then 
\begin{align*}
CR^\omega(P') = (o_1,o_2,o_3) \neq (o_2,o_1,o_3) = TC^{(c,b)}(P').  
\end{align*}
Consider Case 2.  For each $i \in N$, let $\tilde{P} \in \mathcal{P}^N$ be defined by 
\begin{align*}
\tilde{P}_i:~o_3,~o_2,~o_1.
\end{align*}
Then 
\begin{align*}
CR^\omega(\tilde{P}) = (o_1,o_2,o_3) \neq (o_2,o_1,o_3) = TC^{(c,b)}(\tilde{P}). 
\end{align*}
\end{example}
\end{appendices}

\newpage
\bibliographystyle{agsm}
\bibliography{Crawler}

@Article{as98,
  Title                    = {Random serial dictatorship and the core from random endowments in house allocation problems},
  Author                   = {Abdulkadiroglu, Atila and S\"{o}nmez, Tayfun},
  Journal                  = {Econometrica},
  Year                     = {1998},
  Number                   = {3},
  Pages                    = {689--701},
  Volume                   = {66},

  Publisher                = {Blackwell Publishing Ltd.}
}

@Article{anno15,
  Title                    = {A short proof for the characterization of the core in housing markets},
  Author                   = {Anno, Hidekazu},
  Journal                  = {Economics Letters},
  Year                     = {2015},
  Pages                    = {66--67},
  Volume                   = {126},

  Publisher                = {Elsevier}
}

@Article{bade19equiv,
  Title                    = {Random serial dictatorship: the one and only},
  Author                   = {Bade, Sophie},
  Journal                  = {Mathematics of Operations Research},
  Year                     = {2020},
  Number                   = {1},
  Pages                    = {353--368},
  Volume                   = {45},

  Publisher                = {INFORMS}
}

@Article{bade19,
  Title                    = {Matching with single-peaked preferences},
  Author                   = {Bade, Sophie},
  Journal                  = {Journal of Economic Theory},
  Year                     = {2019},
  Pages                    = {81--99},
  Volume                   = {180},

  Publisher                = {Elsevier}
}

@Article{bmrs19,
  author  = {Beynier, Aur{\'e}lie and Maudet, Nicolas and Rey, Simon and Shams, Parham},
  journal = {arXiv preprint},
  title   = {House Markets and Single-Peaked Preferences: From Centralized to Decentralized Allocation Procedures},
  year    = {2019},
}

@Article{carroll14,
  Title                    = {A general equivalence theorem for allocation of indivisible objects},
  Author                   = {Carroll, Gabriel},
  Journal                  = {Journal of Mathematical Economics},
  Year                     = {2014},
  Pages                    = {163--177},
  Volume                   = {51},

  Publisher                = {Elsevier}
}

@InProceedings{dbcm15,
  Title                    = {The power of swap deals in distributed resource allocation},
  Author                   = {Damamme, Anastasia and Beynier, Aur{\'e}lie and Chevaleyre, Yann and Maudet, Nicolas},
  Booktitle                = {Proceedings of the 2015 International Conference on Autonomous Agents and Multiagent Systems},
  Year                     = {2015},
  Organization             = {International Foundation for Autonomous Agents and Multiagent Systems},
  Pages                    = {625--633}
}

@Article{ekici17,
  author  = {Ekici, \"{O}zg\"{u}n},
  journal = {Journal of Mathematical Economics},
  title   = {Random mechanisms for house allocation with existing tenants},
  year    = {2020},
  pages   = {53-65},
  volume  = {89},
}

@InProceedings{elo08,
  Title                    = {Single-peaked consistency and its complexity},
  Author                   = {Escoffier, Bruno and Lang, J{\'e}r{\^o}me and {\"O}zt{\"u}rk, Meltem},
  Booktitle                = {18th European Conference on Artificial Intelligence},
  Year                     = {2008},
  Pages                    = {366--370},
  Volume                   = {8}
}

@Article{knuth96,
  Title                    = {An exact analysis of stable allocation},
  Author                   = {Knuth, Donald E},
  Journal                  = {Journal of Algorithms},
  Year                     = {1996},
  Number                   = {2},
  Pages                    = {431--442},
  Volume                   = {20},

  Publisher                = {Elsevier}
}

@Article{ls11,
  Title                    = {Equivalence results in the allocation of indivisible objects: A unified view},
  Author                   = {Lee, Thiam and Sethuraman, Jay},
  Journal                  = {Columbia University, mimeo},
  Year                     = {2011}
}

@Article{liu18,
  Title                    = {A large class of strategy-proof exchange rules with single-peaked preferences},
  Author                   = {Liu, Peng},
  Journal                  = {Research Collection School of Economics},
  Year                     = {2018}
}

@Article{ma94,
  Title                    = {Strategy-proofness and the strict core in a market with indivisibilities},
  Author                   = {Ma, Jinpeng},
  Journal                  = {International Journal of Game Theory},
  Year                     = {1994},
  Number                   = {1},
  Pages                    = {75--83},
  Volume                   = {23},

  Publisher                = {Springer}
}

@Article{ps11,
  Title                    = {Lotteries in student assignment: An equivalence result},
  Author                   = {Pathak, Parag A and Sethuraman, Jay},
  Journal                  = {Theoretical Economics},
  Year                     = {2011},
  Number                   = {1},
  Pages                    = {1--17},
  Volume                   = {6},

  Publisher                = {Wiley Online Library}
}

@Article{pu17,
  Title                    = {Incentive compatible allocation and exchange of discrete resources},
  Author                   = {Pycia, Marek and {\"U}nver, M Utku},
  Journal                  = {Theoretical Economics},
  Year                     = {2017},
  Number                   = {1},
  Pages                    = {287--329},
  Volume                   = {12},

  Publisher                = {Wiley Online Library}
}

@Article{sethuraman16,
  Title                    = {An alternative proof of a characterization of the {TTC} mechanism},
  Author                   = {Sethuraman, Jay},
  Journal                  = {Operations Research Letters},
  Year                     = {2016},
  Number                   = {1},
  Pages                    = {107--108},
  Volume                   = {44},

  Publisher                = {Elsevier}
}

@Article{ss74,
  Title                    = {On cores and indivisibility},
  Author                   = {Shapley, Lloyd and Scarf, Herbert},
  Journal                  = {Journal of Mathematical Economics},
  Year                     = {1974},
  Number                   = {1},
  Pages                    = {23--37},
  Volume                   = {1},

  Publisher                = {Elsevier}
}

@Article{sprumont91,
  author    = {Sprumont, Yves},
  journal   = {Econometrica},
  title     = {The division problem with single-peaked preferences: a characterization of the uniform allocation rule},
  year      = {1991},
  number    = {2},
  pages     = {509--519},
  volume    = {59},
  publisher = {JSTOR},
}

@Article{svensson99,
  Title                    = {Strategy-proof allocation of indivisible goods},
  Author                   = {Svensson, Lars-Gunnar},
  Journal                  = {Social Choice and Welfare},
  Year                     = {1999},
  Number                   = {4},
  Pages                    = {557--567},
  Volume                   = {16},

  Publisher                = {Springer}
}

@Article{tamura21,
  author  = {Tamura, Yuki},
  journal = {mimeo},
  title   = {Object Reallocation Problems under Single-Peaked Preferences: Two Characterizations of The Crawler},
  year    = {2021},
}

@Article{su05,
  author    = {S{\"o}nmez, Tayfun and {\"U}nver, M Utku},
  journal   = {Games and Economic Behavior},
  title     = {House allocation with existing tenants: an equivalence},
  year      = {2005},
  number    = {1},
  pages     = {153--185},
  volume    = {52},
  publisher = {Elsevier},
}
\end{document}